\def\BibTeX{{\rm B\kern-.05em{\sc i\kern-.025em b}\kern-.08em
    T\kern-.1667em\lower.7ex\hbox{E}\kern-.125emX}}
\crefname{theorem}{thm.}{thm.}
\crefname{section}{sec.}{sec.}
\crefname{corollary}{cor.}{cor.}
\crefname{lemma}{lemma}{lemma}
\Crefname{theorem}{Thm.}{Thm.}
\Crefname{section}{Sec.}{Sec.}
\Crefname{figure}{Fig.}{Fig.}
\Crefname{equation}{Eqn.}{Eqn.}
\newtheorem{definition}{Definition}
\newtheorem{theorem}{Theorem}
\newtheorem{lemma}{Lemma}
\newtheorem{remark}{Remark}
\newcommand{\ev}{\mathbf{e}}
\newcommand{\bv}{\mathbf{b}}
\newcommand{\Am}{\mathbf{A}}
\newcommand{\Um}{\mathbf{U}}
\newcommand{\In}{\mathbf{I}}
\newcommand{\Id}{\mathbf{I}}
\newcommand{\Vm}{\mathbf{V}}
\newcommand{\Dm}{\mathbf{D}}
\newcommand{\Sp}{\sigma_+}
\newcommand{\Sm}{\sigma_-}
\newcommand{\ra}{\rangle}
\newcommand{\uv}{\mathbf{u}}
\newcommand{\Rr}{\mathbb{R}}
\newcommand{\Cr}{\mathbb{C}}
\newcommand{\Mm}{\mathbf{M}}
\newcommand{\fv}{\mathbf{f}}
\newcommand{\wv}{\mathbf{w}}
\newcommand{\vv}{\mathbf{v}}
\newcommand{\ab}{\textcolor{red}}
\newlength\myindent
\begin{document}

\title{Efficient Quantum Access Model for Sparse Structured Matrices using Linear Combination of ``Things”}
\author{Abeynaya Gnanasekaran}%
\email[]{abeynaya.gnanasekaran@sri.com}
\affiliation{SRI International, Menlo Park, CA, USA}
\author{Amit Surana}%
\email[]{amit.surana@rtx.com}
\affiliation{RTX Technology Research Center (RTRC), East Hartford, CT, USA}

\begin{abstract}
We present a novel framework for Linear Combination of Unitaries (LCU)-style decomposition tailored to structured sparse matrices, which frequently arise in the numerical solution of partial differential equations (PDEs). 
While LCU is a foundational primitive in both variational and fault-tolerant quantum algorithms, conventional approaches based on the Pauli basis can require a number of terms that scales quadratically with matrix size. 
We introduce the Sigma basis, a compact set of simple, non-unitary operators that can better capture sparsity and structure, enabling decompositions with only polylogarithmic scaling in the number of terms. 
We develop both numerical and semi-analytical methods for computing Sigma basis decompositions of arbitrary matrices. Given this new basis is comprised of non-unitary operators, we leverage the concept of unitary completion to design efficient quantum circuits for evaluating observables in variational quantum algorithms and for constructing block encodings in fault-tolerant quantum algorithms.  
We compare our method to related techniques like unitary dilation, and demonstrate its effectiveness through several PDE examples, showing exponential improvements in decomposition size while retaining circuit efficiency.
\end{abstract}
\maketitle
\section{Introduction}

Linear Combination of Unitaries (LCU) is a versatile quantum algorithmic primitive that underlies a wide range of quantum algorithms, including those for Hamiltonian simulation, solving linear systems and  differential equations, implementing quantum walks, ground-state preparation, and optimization. It enables the decomposition of a non-unitary operator as a weighted sum of unitary operators which are typically tensor products of Pauli operators. Over the years, LCU has played a central role in the development of numerous quantum algorithms both in the context of variational quantum algorithms (VQAs) and fully fault-tolerant ones. 

VQAs are hybrid classical-quantum algorithms that have emerged as promising candidates to optimally utilize today's Noisy Intermediate Scale Quantum (NISQ) devices. 
Notable examples include, Variational Quantum Eigensolver (VQE) \cite{VQE} for ground-state energy estimation of a Hamiltonian and Variational Quantum Linear Solver (VQLS) \cite{VQLS} for solving system of linear equations. For instance, in VQE and VQLS, the Hamiltonian or system matrix is first decomposed via LCU into a sum of unitaries. Then, quantum subroutines—such as the Hadamard test—are used along with the parametrized quantum circuits (PQC) to measure the expectation value of each unitary term in LCU and the results are combined to evaluate the cost function.

LCU has also been used to construct block-encodings of non-unitary operators~\cite{berry2019qubitization}, which serve as powerful building blocks for many fault-tolerant quantum algorithms. Block-encoding enables one to evaluate the action of non-unitary operators on quantum states and plays a key role in a variety of contexts, including qubitization~\cite{low2019hamiltonian}, quantum signal processing~\cite{QSP}, and quantum singular value transformation (QSVT)~\cite{QSVT}.

A major bottleneck in LCU based methods is the number of terms $L$ in the LCU decomposition as the measurement cost often scales polynomially in the number of terms (e.g., $\mathcal{O}(L^2)$ in VQLS). 
Although Pauli-based decompositions yield low-depth circuits, the number of LCU terms can scale as polynomially in the size of the Hamiltonian or system matrix. To address this, LCU terms are often grouped into commuting subsets for simultaneous measurement~\cite{comm}, using classical heuristics based on graph coloring or clique cover. Additional techniques for measurement reduction include optimized sampling~\cite{samp}, classical shadows~\cite{shad}, and neural network tomography~\cite{tomo}.

For decomposing sparse and structured matrices, recent work has shown that replacing Pauli operators with a small set of simple, non-unitary operators $\mathbb{S} =\{\Id, \Sp=\ket{0} \bra{1}, \Sm=\ket{1}\bra{0}, \Sp\Sm = \ket{0}\bra{0}, \Sm\Sp=\ket{1}\bra{1}\}$, referred to as the Sigma basis, can significantly reduce the number of decomposition terms. Such matrices frequently arise in the numerical solution of partial differential equations (PDEs), which are central to many applications in science and engineering. For instance, \cite{liu2021variational} showed that for the matrix arising from the discretized Poisson equation, the number of Sigma-basis terms scales logarithmically with matrix size.
However, since Sigma basis consists of non-unitary operators, evaluating cost functions in VQAs requires specially designed observables and circuits. This limits the general applicability of the technique, as constructing observables for arbitrary tensor products of elements from $\mathbb{S}$ is non-trivial, and measurement typically requires access to all qubits, increasing overhead.

A more general and scalable approach was introduced in~\cite{gnanasekaran2024efficient}, where the authors used unitary completion to construct efficient quantum circuits to evaluate both global and local VQLS cost functions. Applied to the time-dependent heat equation, this method yields a number of decomposition terms that scale logarithmically with both spatial and temporal grid size as opposed to linear scaling with Pauli basis. Building on ~\cite{gnanasekaran2024efficient}, this paper makes several new contributions:
\begin{itemize}
  \item We develop numerical and semi-analytical methods for computing Sigma basis decompositions for arbitrary matrices.
  \item We present a general pseudocode and resource estimates for constructing efficient quantum circuits that implement unitary completion of non-unitary tensor-product operators from the Sigma basis.
  \item We construct Hadamard-style test circuits to evaluate observables commonly encountered in VQAs using Sigma basis decompositions.
  \item We propose a block-encoding framework for arbitrary operators expressed in the Sigma basis, with resource estimates for its use in fault-tolerant algorithms.
  \item We provide a rigorous comparison with related techniques, including unitary dilation and Pauli-based LCU decomposition.
  \item Finally, we demonstrate our approach on several PDE examples, showing significant numerical advantages over Pauli-based decompositions.
\end{itemize}

The rest of the paper is organized as follows. In~\Cref{sec: prelim}, we introduce the mathematical preliminaries. \Cref{sec: sigma} summarizes the Sigma basis decomposition and the concept of unitary completion. In~\Cref{subsec:Um}, we present the pseudocode for constructing the unitary completion operator, illustrate it with an example, and prove its correctness. \Cref{sec: methoddecomp} outlines numerical and semi-analytical methods for computing the Sigma basis decomposition of arbitrary matrices. In~\Cref{sec: apps}, we apply the decomposition to linear systems arising from discretizations of canonical linear PDEs and compare the results to Pauli basis LCU decompositions. In~\Cref{sec: uses_lcu}, we show how the non-unitary operators constructed using the Sigma basis can be used to compute observables in VQAs and to block-encode the system matrix for use in fault-tolerant quantum algorithms. In~\Cref{sec: discussion}, we analyze the connections between unitary completion and unitary dilation.  We conclude in~\Cref{sec: conc} with a summary and outline of future research directions.

\section{Preliminaries}\label{sec: prelim}
We will denote by $\mathbb{R}$ as the set of real numbers, $\mathbb{C}$ as the set of complex numbers,  small bold letters as vectors, capital bold letters as matrices/operators,  $\Am^*$ as the vector/matrix complex conjugate, $\Am^T$ as the vector/matrix transpose and $\Id_s$ as an Identity matrix of size $s \times s$. We will use standard braket notation in representing the quantum states. We use $C^n X$ to represent the \textit{n}-controlled Toffoli gate. With this notation, $C^1 X$ is the CNOT gate and $C^2 X$ is the Toffoli gate or CCNOT gate.

The rows of the matrix are numbered from $0$ to $2^n-1$ (0-indexed) for ease of mapping from decimal to binary system and vice-versa. The binary representation of a decimal number is given within $\{\dots\}$ and can be inferred from context. 

LCU decomposition of a matrix $\Am \in \Cr^{2^n\times 2^n}$ is defined as
\begin{equation}\label{eq:lcu}
\Am=\sum_{l=1}^{L}\alpha_l\Am_l,
\end{equation} 
where $\alpha_l\in \mathbb{C}$ and $\Am_l, l=1, \dots, L$  are unitary operators. A popular choice for building $\Am_l$ uses the Pauli operator basis, i.e. $\mathbb{P}=\{\sigma_x,\sigma_y,\sigma_z, \Id_2\}$ where $\sigma_x,\sigma_y,\sigma_z$ are Pauli-X, Pauli-Y and Pauli-Z single qubit operators,  so that
\begin{equation}
\Am_l=\sigma_0\otimes\dots\otimes\sigma_{n-1},\quad \sigma_l\in \mathbb{P}.
\end{equation}
When $\Am_l$ is instead chosen to be non-unitary operators, we  refer to the decomposition as Linear Combination of Non-Unitaries (LCNU). 

\section{Linear Combination of Non-unitaries Decomposition}
\label{sec: sigma}

In~\cite{liu2021variational,gnanasekaran2024efficient}, the authors developed a LCNU decomposition under a set of simple, albeit non-unitary, referred to as the Sigma basis.
\begin{definition}
The Sigma basis is the set
\begin{equation}\label{def:S}
\mathbb{S} =\{\Id_2, \Sp, \Sm, \Sp\Sm, \Sm\Sp\},
\end{equation}
where,
\begin{align*}
    \Sp &=\ket{0} \bra{1} = \begin{bmatrix}
        0 & 1 \\ 0 & 0
    \end{bmatrix} \quad
    \Sp\Sm = \ket{0}\bra{0} =\begin{bmatrix}
        1 & 0 \\ 0 & 0
    \end{bmatrix}
    \\
    \Sm  &=\ket{1}\bra{0} =\begin{bmatrix}
        0 & 0 \\ 1 & 0
    \end{bmatrix}  \quad
    \Sm\Sp =\ket{1}\bra{1} =\begin{bmatrix}
        0 & 0 \\ 0 & 1
    \end{bmatrix}
\end{align*}
\end{definition}
To address the non-unitarity of operators in the decomposition, the authors in~\cite{liu2021variational} designed specialized observables to compute expectation values with respect to each term in the decomposition. A more general approach was later introduced in~\cite{gnanasekaran2024efficient}, which bypasses the need for custom observables by using unitary completion—a framework that we extend in this work. In this section, we briefly review the key ideas from that approach.  

Given any arbitrary matrix $\Am \in \Cr^{2^n\times 2^n}$, its LCNU decomposition over the Sigma basis is defined by \Cref{eq:lcu} with
\[ \Am_l=\sigma_0\otimes\dots\otimes\sigma_{n-1},\quad \sigma_l\in \mathbb{S} \quad l=0, \dots, n-1.\]
For certain sparse and structured matrices, the number of terms $L$ in the LCNU decomposition can vary poly-logarithmically with problem size $N=2^n$, providing an exponential improvement over the standard LCU decomposition using Pauli basis~\cite{gnanasekaran2024efficient}. 
\begin{remark}\label{rem: example_spmat}
    For example, consider the decomposition of the following sparse matrix in Pauli and Sigma basis respectively,
\begin{align}
\begin{bmatrix}
        &&&1\\
        &&0&\\
        &0&&\\
        2&&
    \end{bmatrix}
= 
\begin{aligned}
    &0.75\,\sigma_x \otimes \sigma_x 
    -0.25i\,\sigma_x \otimes \sigma_y \\
    &- 0.25i\, \sigma_y \otimes \sigma_x 
    - 0.75\, \sigma_y \otimes \sigma_y
\end{aligned}
\end{align}
\begin{align}
    \begin{bmatrix}
        &&&1\\
        &&0&\\
        &0&&\\
        2&&
    \end{bmatrix} &= \Sp\otimes\Sp + 2\,\Sm \otimes \Sm.
\end{align}
The Pauli basis decomposition requires 4 terms for this example, while the Sigma basis uses only 2. More generally, the number of terms in the Sigma basis decomposition satisfies $L\leq \text{nnz}(A)$ where $\text{nnz}(A)$ are the number of non-zero entries in the matrix. For a more detailed discussion, see~\Cref{sec: methoddecomp}. For further details, see~\Cref{sec: methoddecomp}. Additionally, Pauli-based decompositions can introduce imaginary components through $\sigma_y$, even for real matrices. In contrast, the Sigma basis avoids this by construction; for complex matrices, complex values appear only in the coefficients, while the tensor products remain real. This can be advantageous in certain VQA cost evaluations.
\end{remark}

To efficiently compute expectation values of the form $\bra{\psi_1}\Am_l\ket{\psi_2}$, $\bra{\psi_1}\Am^*\ket{\psi_2}$ and $\bra{\psi_1}\Am^*\,\mathbf{M}\,\Am\ket{\psi_2}$, unitary operators of the form
\[
\Um_l = \begin{bmatrix}
    \Am_l & \star \\
    \star & \star 
\end{bmatrix},
\]
can be constructed. The construction of such a unitary operator is not unique and can be understood from the literature on block encoding, unitary dilation, projective measurements and unitary dynamics~\cite{nielsen2010quantum}. The key distinguishing feature of~\cite{gnanasekaran2024efficient} lies in its \textit{efficient} construction of $\Um_l$ for the Sigma basis using the concept of unitary completion. 

\begin{remark}
\label{rem:ul_bl}
    The unitary operator $U_l$ defined in~\cite{gnanasekaran2024efficient} was of the form $\begin{bmatrix}
        \star & \Am_l \\ \star & \star
    \end{bmatrix}$. We choose to redefine it slightly for easier interpretation and analogy with standard expressions of block encoding and unitary dilation. 
\end{remark}

Unitary completion is the process of extending a given set of orthonormal vectors to a full orthonormal basis of the complex vector space as given by~\Cref{def: completion}. Note that, as the basis set $\mathbb{S}$ is made up of real matrices, we can alternately use the term orthogonal completion. Using this, we embed partial orthonormal operators (such as the elements of $\mathbb{S}$ and their tensor products) into larger unitary systems of the form given in~\Cref{def: Ucomp}.  

\begin{definition}
\label{def: completion}
 Let $W\subset V\subseteq \mathbb{C}^N$ be complex-valued vector spaces. If $\mathbf{Q}: W \rightarrow V$ is a linear operator which preserves inner products, i.e., for any $\ket{w_1}$, $\ket{w_2}$ in $W \subset V$, $\bra{w_1}\mathbf{Q}^*\mathbf{Q}\ket{w_2} = \bra{w_1}w_2\ra$, then an unitary operator $\bar{\mathbf{Q}}: V\rightarrow V$ is its unitary completion if $\bar{\mathbf{Q}}$ spans the whole space $V$ and $\bar{\mathbf{Q}} \ket{w} =\mathbf{Q} \ket{w}\, \forall \ket{w} \in W$. Also, $\mathbf{Q}^c \coloneqq 
 \bar{\mathbf{Q}}-\mathbf{Q}$ is the orthogonal complement of $\mathbf{Q}$. Such a unitary operator $\bar{\mathbf{Q}}$ always exists (see Ex 2.67,~\cite{nielsen2010quantum}).
\end{definition}

\begin{definition}\label{def: Ucomp}
Consider an  operator $\Um_l$ associated with operator $\Am_l$ of the form
\begin{equation}
\label{eq: Um_l}
\Um_l = \begin{bmatrix}
\Am_l & \Am_l^c \\
\Am_l^c & \Am_l
\end{bmatrix},
\end{equation}
where, $\Am_l^c$ is the orthogonal complement of $\Am_l$.    
\end{definition}

By definition, the subspace spanned by the columns of the $\Am^c_l$ are orthogonal to the subspace spanned by the columns of $\Am_l$, i.e., $\Am_l^T\Am_l^c =0$ (see Lemma 1 in~\cite{gnanasekaran2024efficient} for proof). With this, it is straightforward to see that $\Um_l$ is in fact a unitary matrix. The application of $\Um_l$ on an ancilla system of the form $\ket{0}\ket{\psi}$ gives 
\begin{equation}
\label{eq: Umact}
\Um_l\ket{0} \ket{\psi} = \ket{0} \Am_l \ket{\psi} + \ket{1} \Am_l^c\ket{\psi}.
\end{equation} 

\begin{theorem}
\label{thm: Acomp}
If $\Am_l = \otimes_{k} \sigma_k$ where $\sigma_k \in \mathbb{S}$, then $\bar{\Am}_l\coloneqq\otimes_k \bar{\sigma}_k$ is the unitary completion of $\Am_l$, where $\bar{\sigma}_k = \sigma_x$ for $\sigma_k \in \{ \Sp, \Sm\}$ and $\bar{\sigma}_k = \Id$ for $\sigma_k \in \{ \Id, \Sp\Sm, \Sm\Sp\}$. 
\end{theorem}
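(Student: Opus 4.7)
The plan is to prove the tensor‐product formula by first establishing it at the level of a single qubit and then lifting it to $n$ qubits using bilinearity of the tensor product. To start, I would identify, for each $\sigma_k\in\mathbb{S}$, the subspace $W^{(k)}\subseteq\mathbb{C}^2$ on which $\sigma_k$ acts as an inner-product–preserving map, since by \Cref{def: completion} the unitary completion is only required to agree with the operator on this subspace. Reading off from the definitions in \Cref{def:S}, these are $W^{(k)}=\text{span}(\ket{1})$ for $\sigma_k=\Sp$, $\text{span}(\ket{0})$ for $\Sm$, $\text{span}(\ket{0})$ for $\Sp\Sm$, $\text{span}(\ket{1})$ for $\Sm\Sp$, and all of $\mathbb{C}^2$ for $\Id$.

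Next I would verify the single-qubit base case directly on the computational basis of $W^{(k)}$: $\sigma_x\ket{1}=\ket{0}=\Sp\ket{1}$ and $\sigma_x\ket{0}=\ket{1}=\Sm\ket{0}$ handle the $\sigma_x$-completions, while $\Id\ket{0}=\ket{0}=\Sp\Sm\ket{0}$, $\Id\ket{1}=\ket{1}=\Sm\Sp\ket{1}$, and the trivial $\Id\ket{\psi}=\Id\ket{\psi}$ handle the remaining cases. In each case the proposed $\bar{\sigma}_k$ is manifestly unitary, and it agrees with $\sigma_k$ on $W^{(k)}$, so $\bar{\sigma}_k$ is a valid unitary completion of the restriction $\sigma_k|_{W^{(k)}}$.

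For the tensor-product case I would take $W_l:=\bigotimes_k W^{(k)}$ as the natural domain on which $\Am_l=\bigotimes_k \sigma_k$ preserves inner products; a one-line calculation using $\bra{w_k}\sigma_k^*\sigma_k\ket{w_k'}=\la w_k|w_k'\ra$ for $\ket{w_k},\ket{w_k'}\in W^{(k)}$, extended multiplicatively over the tensor factors, confirms this. It then suffices to check the two defining properties of a unitary completion. Unitarity of $\bar{\Am}_l=\bigotimes_k\bar{\sigma}_k$ is immediate, since each $\bar{\sigma}_k\in\{\sigma_x,\Id\}$ is unitary and tensor products of unitaries are unitary. For agreement on $W_l$, any product-state basis vector $\bigotimes_k\ket{w_k}\in W_l$ satisfies
\[
\bar{\Am}_l \bigotimes_k \ket{w_k} \;=\; \bigotimes_k \bar{\sigma}_k\ket{w_k} \;=\; \bigotimes_k \sigma_k\ket{w_k} \;=\; \Am_l \bigotimes_k\ket{w_k},
\]
by the single-qubit base case, and extension to all of $W_l$ follows by linearity.

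I do not expect a serious obstacle here; the only subtlety is matching the non-standard definition in \Cref{def: completion} (which talks about an operator from a subspace $W$ to $V$) with the concrete matrices in \Cref{def:S}, and making sure that the tensor product of the correct domains $W^{(k)}$ really is the full isometry domain of $\Am_l$ — a fact I would justify by noting that a product operator $\otimes_k\sigma_k$ is an isometry on a product subspace iff each factor is an isometry on its respective subspace, which in turn follows from $(\otimes_k\sigma_k)^*(\otimes_k\sigma_k)=\otimes_k(\sigma_k^*\sigma_k)$.
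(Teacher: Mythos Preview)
Your proposal is correct and is the natural direct verification of the statement. The paper itself does not give a self-contained proof of \Cref{thm: Acomp}; it simply cites Theorem~2 of \cite{gnanasekaran2024efficient}, so there is no in-paper argument to compare against beyond noting that your single-qubit check followed by a tensor-product lift is exactly the kind of elementary proof one would expect that reference to contain. One very minor point: \Cref{def: completion} is phrased with $W\subset V$ strict, which technically clashes with your $W^{(k)}=\mathbb{C}^2$ when $\sigma_k=\Id$; you could sidestep this by remarking that in the $\Id$ case the ``completion'' is trivially $\Id$ itself, or by noting that the strictness in the definition is inessential. Your closing remark about $W_l$ being the \emph{full} isometry domain is more than is needed---\Cref{def: completion} only requires agreement on \emph{some} subspace $W$ on which the operator is an isometry, so exhibiting $W_l=\bigotimes_k W^{(k)}$ and checking agreement there already suffices.
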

\begin{proof}
See proof of Theorem 2 in~\cite{gnanasekaran2024efficient}.
\end{proof}

The unitary completion of the term $\Am_l = \otimes_{k} \sigma_k$ where $\sigma_k \in \mathbb{S}$ is given by~\Cref{thm: Acomp}. Note that the completion $\bar{\Am}_l$ has a simple closed-form expression in terms of Pauli matrices and hence, is straightforward to compute. The orthogonal complement $\Am_l^c = \bar{\Am}_l - \Am_l$ is generally harder to compute for any given $\Am_l$. However, one does not need to compute its expression explicitly in order to construct the circuit for implementing $\Um_l$. The construction of the quantum circuit is described in detail in~\Cref{subsec:Um} along with examples.

\subsection{Circuit construction for $
\Um_l$}
\label{subsec:Um}

The technique of unitary completion is a structured approach for block-encoding non-unitary operators $\Am_l$. The key feature of this technique lies in our ability to efficiently construct the circuit for the associated unitary operators $\Um_l$. We begin by providing a pseudo-code for the circuit construction and then prove its correctness in detail. 

Given a tensor-product over the Sigma basis, $\Am_l = \otimes_k \sigma_k$, the circuit for the corresponding unitary operator $\Um_l$ can be constructed by following~\Cref{circuitalg}. Note that the circuit construction only requires the computation of $\bar{\Am}_l$ and \textit{not} the complement $\Am_l^c$. Moreover, the circuit can be constructed with $n+1$ single qubit gates $\{\sigma_x, \Id\}$ and one $C^{m}X$ gate where $m \leq n$.

\begin{algorithm}
\caption{Pseudo-code for circuit construction of $\Um_l$ operator}\label{circuitalg}
\begin{algorithmic}[1]
\Require $\Am = \sigma_0 \otimes \sigma_1 \otimes \dots \otimes \sigma_{n-1}$ 
\Require $n+1$ qubit system with qubits $q_0, q_1, \dots, q_{n-1}$ and ancilla $a_0$.
\State Compute $\bar{\Am}_l = \otimes_n \bar{\sigma}_k$ where $\bar{\sigma}_k \in \{\sigma_x, I\}$
\State Apply the $n$ single-qubit gates that make up $\bar{\Am}_l$ 
\State Apply $X$ gate on $a_0$
\State Set $k = \sum_{p=0}^{n-1} [\sigma_p = \Id]$ where $[P]$ is defined as, \[[P]= \begin{cases}
    1,& \text{if true} \\
    0,& \text{if false}
\end{cases}\]
\State Construct a $C^{n-k} X$ gate as follows\;
\SetAlgoNoLine
\For{$i \gets 0$ \KwTo $n-1$}{
    \If{$\sigma_i \in \{\Sm, \Sm\Sp\}$}{
        Add closed control  on $q_i$\;
    }
    \If{$\sigma_i \in \{\Sp, \Sp\Sm\}$}{
        Add open control on $q_i$\;
    }
}
Add a X gate on the target $a_0$ 
\end{algorithmic}
\end{algorithm}

\textit{Example:} Let us walk through circuit construction using a simple example. Consider a three qubit system on $q_0, q_1, q_2$, with $\Am_l = \Sm \otimes \Id \otimes \Sp\Sm$. By adding an extra ancilla qubit $a_0$, $\Um_l$ can be constructed as follows and is shown in~\Cref{fig: U_eg}.
\begin{enumerate}
    \item Compute the unitary completion of $\Am_l$ as $\bar{\Am}_l = \overline{\Sm}\otimes \bar{\Id} \otimes \overline{\sigma_+\sigma}_- = \sigma_x \otimes \Id \otimes \Id$. 
    \item Construct the single-qubit gates $X$ on $q_0$ and $\Id$ on $q_1, q_2$.
    \item Apply $X$ gate on $a_0$.
    \item Compute $k$ as the number of factors in representation of $\Am_l$ that are Identity operations. In this case, $k=1$ as the second term in the tensor product is $\Id$.
    \item Construct a $C^2 X$ gate (CCNOT gate) by adding a closed control on $q_0$, open control on $q_2$ and a $X$ gate on target $a_0$.
\end{enumerate}

\begin{figure}[thbp]
    \centering
    \begin{quantikz}[wire types = {q, q, q, q}, transparent]
\lstick{$a_0$} &  \gate{X} & \targ{}  &  \\
\lstick{$q_0$} & \gate{X}&  \ctrl{-1} &\\
\lstick{$q_1$}  &  \gate{I}        &     & \\
\lstick{$q_2$}  &  \gate{I}        & \octrl{-2}  &  
\end{quantikz} 
    \caption{Circuit for $\Um_l$ corresponding to the term $\Am_l = \Sm \otimes \Id \otimes \Sp\Sm$.}
    \label{fig: U_eg}
\end{figure}
We have successfully constructed the circuit in five simple steps. Readers not interested in the proof of correctness of~\Cref{circuitalg} may skip the following subsection and continue to~\Cref{sec: methoddecomp}, where we discuss techniques for computing Sigma basis decompositions of arbitrary matrices.

\subsection{Correctness of construction}
\label{subsec: corr_cons}

The cost of constructing $\Um_l$, as defined in~\Cref{def: Ucomp}, is established in~\Cref{thm: circuit_U}. Before presenting the proof of this, we introduce a useful subroutine in~\Cref{lem:toff}. Although some of the material overlaps with~\cite{gnanasekaran2024efficient}, we include it here for completeness and to ensure readability. Moreover, since the definition of $\Um_l$ varies slightly as noted in~\Cref{rem:ul_bl}, a detailed treatment of~\Cref{thm: circuit_U} is provided for clarity.

\begin{lemma}\label{lem:toff}
    A $C^n X$ gate corresponds to a permutation matrix of size $2^{n+1} \times 2^{n+1}$ that permutes two rows whose binary representations differ by a single bit.
\end{lemma}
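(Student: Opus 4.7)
The plan is to verify the lemma by directly analyzing the action of $C^n X$ on the computational basis and recognizing that this action is a permutation which swaps exactly one pair of basis vectors.

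First, I would fix notation: $C^nX$ acts on an $(n+1)$-qubit register composed of $n$ control qubits $q_0,\ldots,q_{n-1}$ and one target qubit $t$, and so its matrix is $2^{n+1}\times 2^{n+1}$. By the usual definition, $C^nX$ acts trivially on any computational basis state $|b_0\ldots b_{n-1}\rangle|b_t\rangle$ unless the control string $b_0\ldots b_{n-1}$ matches a fixed activating pattern $c_0\ldots c_{n-1}\in\{0,1\}^n$ (where a closed control requires $c_i=1$ and an open control requires $c_i=0$); when it does match, the gate flips $b_t$.

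Next, I would observe that this action sends computational basis states to computational basis states, so the matrix is a 0/1 matrix with exactly one nonzero entry in each row and column, i.e.\ a permutation matrix. Exactly $2^{n+1}-2$ basis states are fixed points (all those whose control pattern does not equal $c_0\ldots c_{n-1}$). The only two basis states that are moved are $|c_0\ldots c_{n-1}\rangle|0\rangle$ and $|c_0\ldots c_{n-1}\rangle|1\rangle$, which are swapped by the gate.

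Finally, I would read off the binary indices of those two swapped basis states: writing them as integers in the convention of the paper, they are $c_0 c_1\cdots c_{n-1}\,0$ and $c_0 c_1\cdots c_{n-1}\,1$, which agree in all but the last bit (the target-qubit bit). Hence the associated rows of the permutation matrix differ in exactly one bit of their binary labels, establishing the claim. The only real subtlety is bookkeeping conventions, namely specifying the target qubit's position in the bit string and allowing arbitrary mixtures of open and closed controls through the pattern $c_0\ldots c_{n-1}$; beyond that, no substantive obstacle arises.
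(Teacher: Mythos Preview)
Your proof is correct and follows essentially the same approach as the paper: analyze the action of $C^nX$ on the computational basis, observe it is a permutation fixing all but two basis states, and note those two states differ only in the target-qubit bit. The only cosmetic difference is that the paper argues WLOG for all closed controls and handles open/closed patterns in a subsequent remark, whereas you build in the general activating pattern $c_0\cdots c_{n-1}$ from the start.
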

\begin{proof}
    $C^n X$ gate has $n$ control qubits and 1 target qubit by definition. WLOG, assume that the first $n$ qubits are control bits (with closed control) and NOT gate on the target $n+1$ qubit. This corresponds to a permutation matrix of the form
    \[
    P = \begin{bmatrix}
        \Id & \\
        & \sigma_x
    \end{bmatrix}.
    \]
    Applying $P$ on a vector (or matrix), permutes the last two rows of the said vector (or matrix). That is, it permutes two rows $r_1$ and $r_2$ where
    \begin{align*}
         r_1 &= \{ 11\dots 10\} = \sum_{p=0}^{n-1}2^{n-p}, \\
    r_2 &= \{11\dots 11\} = \sum_{p=0}^{n-1}2^{n-p} + 1.
    \end{align*}
    The terms in the $\{\dots\}$ are the binary representations. Note that the binary representations differ only in a single bit (which corresponds to the target qubit). 
\end{proof}

\begin{remark}\label{rem:control_op}
    A $C^n X$ gate operating on target qubit $k$ permutes two rows whose binary representation differ in the $k$-th bit. The control operation on the remaining $n$ qubits (open or closed control) determines the exact rows being permuted. If there is an open control (closed control) on $p \neq k$-th qubit, then the corresponding bit is 0 (1) in the binary representation of the rows. 
\end{remark}

The truth table for different versions (open/ closed control) of the CCNOT (or $C^2X$ gate) are given in~\Cref{app: ccnot}. They can be used as examples to understand the above remark.

\begin{restatable}{corollary}{kbitsswap} \label{cor:kbits_swap}
    $2k + 1$ $C^n X$ gates are needed to permute two rows in a $2^{n+1}\times 2^{n+1}$ matrix that differ by $k+1$ bits in their binary representation.
\end{restatable}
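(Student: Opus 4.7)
The plan is to establish the necessity statement by tracking the two ``tokens'' initially located at the rows $u$ and $v$ to be swapped (with $d_H(u,v) = k+1$), classifying each $C^n X$ gate by how it interacts with those tokens, and then showing that the resulting counts satisfy $m \geq 2k+1$. By \Cref{lem:toff} each $C^n X$ gate is an adjacent transposition in the hypercube graph on $2^{n+1}$ rows, so writing the target transposition $(u,v)$ as a product of $m$ such gates amounts to realizing a Cayley-graph walk from the identity to $(u,v)$ using edge-transpositions of the $(n{+}1)$-cube as generators.

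Setup: let $p_0,\ldots,p_m$ and $q_0,\ldots,q_m$ denote the positions of the $u$-token and the $v$-token after each successive gate, so $p_0=u$, $p_m=v$, $q_0=v$, $q_m=u$. Classify each gate $\tau_i$ into four types: \emph{both} if $\tau_i=(p_{i-1},q_{i-1})$ (which forces $d_H(p_{i-1},q_{i-1})=1$), \emph{only-$T_u$} if $\tau_i$ touches $p_{i-1}$ but not $q_{i-1}$, \emph{only-$T_v$} if it touches $q_{i-1}$ but not $p_{i-1}$, and \emph{neither} otherwise. Let the respective counts be $K,\alpha,\beta,\gamma$, so $m=K+\alpha+\beta+\gamma$. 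Because the $u$-token must traverse hypercube-distance $k+1$ using only the gates that move it, $K+\alpha\geq k+1$, and by symmetry $K+\beta\geq k+1$.

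The main argument uses the integer-valued quantity $\delta_i \coloneqq d_H(p_i,q_i)$, with $\delta_0=\delta_m=k+1$. A direct check shows that \emph{both} and \emph{neither} moves leave $\delta$ unchanged, while each \emph{only} move alters $\delta$ by exactly $\pm 1$ (toggling one bit in either $p$ or $q$ flips the Hamming distance by one); moreover, every \emph{both} step requires $\delta_{i-1}=1$. Now split into two cases. If $K=0$, then $\alpha\geq k+1$ and $\beta\geq k+1$ immediately, so $m\geq\alpha+\beta\geq 2(k+1)>2k+1$. If $K\geq 1$, the walk $\delta_0,\ldots,\delta_m$ begins and ends at $k+1$ yet must touch $1$ at the index of any \emph{both} step; since only the $\alpha+\beta$ \emph{only}-steps can alter $\delta$ and each contributes $\pm 1$, a total-variation bound forces $\alpha+\beta\geq 2k$, hence $m\geq K+\alpha+\beta\geq 2k+1$.

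The main obstacle is making the total-variation step fully rigorous: I need to argue cleanly that any integer walk that starts and ends at $k+1$ and visits $1$ must use at least $2k$ increments of $\pm 1$, and then justify that only ``only''-type gates can be those increments (so $\alpha+\beta$ is the count of such increments). Both pieces are elementary consequences of the hypercube structure and the triangle inequality for Hamming distance; once they are in place, the two-case split yields the claimed lower bound $m\geq 2k+1$.
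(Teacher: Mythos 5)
Your argument addresses the opposite direction of the claim from the one the paper actually proves. The paper's proof of \Cref{cor:kbits_swap} (\Cref{app:pf_cor1}) is a constructive induction on $k$ establishing \emph{sufficiency}: to swap $r_1$ and $r_2$ differing in $k+1$ bits, first swap $r_1$ with an intermediate row $r_1'$ differing from $r_1$ in one bit (one $C^nX$ gate by \Cref{lem:toff}), then swap position $r_1'$ with $r_2$ (which differ in $k$ bits, costing $2(k-1)+1$ gates by the induction hypothesis), then swap $r_1'$ and $r_1$ back, for a total of $2k+1$. That upper bound is what \Cref{thm: Udil} consumes when it asserts $\tilde{\Um}_{l,1}$ \emph{can be implemented} with $2s+1$ multi-controlled gates; your proof supplies no construction and therefore cannot support that implementability claim. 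If the grading criterion is ``does the proposal prove what the paper's proof proves,'' the answer is no: you have read ``are needed'' as a lower bound, whereas the paper's proof (and its downstream use) treats it as an achievable cost.

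That said, your lower-bound argument itself looks sound and is genuinely complementary. Tracking the two tokens, classifying each adjacent transposition as \emph{both}/\emph{only}/\emph{neither}, noting that a \emph{both} step requires $d_H(p_{i-1},q_{i-1})=1$ while only \emph{only} steps change $\delta_i=d_H(p_i,q_i)$ and do so by exactly $\pm1$, and then running the two-case split ($K=0$ gives $m\ge 2k+2$; $K\ge1$ gives $\alpha+\beta\ge 2k$ by total variation, hence $m\ge 2k+1$) is correct, and the step you flag as the ``main obstacle'' is routine: since $\delta_0=\delta_m=k+1$ the numbers of $+1$ and $-1$ increments are equal, and reaching $1$ forces at least $k$ decrements in some prefix, so $\alpha+\beta\ge 2k$. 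Combined with the paper's construction, your argument would upgrade the corollary to an exact optimality statement, which would in fact strengthen the comparison in \Cref{sec: discussion} (it shows dilation genuinely cannot match completion, not merely that the given dilation circuit is larger). But as a standalone replacement for the paper's proof it is incomplete, because the constructive half is missing.
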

\begin{proof}
    See~\Cref{app:pf_cor1} for proof.
\end{proof}

\begin{lemma}
\label{lem: Arow}
    If $\Am_l = \otimes_p \sigma_p$, where $\sigma_p \in \mathbb{S}$, $r = \sum_{p=0}^{n-1} 2^{n-1-p} q_r(p)$, and $c=\sum_{p=0}^{n-1}2^{n-1-p} q_c(p)$, then \[
    \Am_l(r,c)= 1 \iff   \sigma_p(q_r(p),q_c(p)) = 1\quad  \forall p,\]
    where $q_r(p), q_c(p)\in \{0,1\}$.
\end{lemma}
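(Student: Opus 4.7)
The plan is to reduce the statement to the standard entry-wise formula for a Kronecker product and then exploit the fact that every matrix in $\mathbb{S}$ has entries only in $\{0,1\}$. The key input is the identity
\[
(\sigma_0 \otimes \sigma_1 \otimes \dots \otimes \sigma_{n-1})(r,c) \;=\; \prod_{p=0}^{n-1} \sigma_p\bigl(q_r(p), q_c(p)\bigr),
\]
valid whenever $r$ and $c$ are written out in the binary expansions given in the lemma statement. This is just the iterated version of the defining property $(A\otimes B)(i \cdot d_B + j,\, k \cdot d_B + \ell)=A(i,k)B(j,\ell)$, which I would establish by induction on $n$: the base case $n=1$ is trivial, and in the inductive step one peels off the leading qubit, noting that the top bit of $r$ selects the row-block of $\sigma_0$ while the remaining bits index into the smaller Kronecker product $\sigma_1\otimes\cdots\otimes\sigma_{n-1}$ on the appropriate sub-row.

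Once this tensor-product entry formula is in hand, I would observe that every element of $\mathbb{S}$, as listed in~\Cref{def:S}, has entries drawn from $\{0,1\}$. Hence for each $p$, the factor $\sigma_p(q_r(p), q_c(p))$ is either $0$ or $1$, and the product above evaluates to $0$ or $1$ accordingly. Therefore $\Am_l(r,c)=1$ forces every factor to equal $1$, and conversely if every factor is $1$ the product is $1$. This gives exactly the biconditional claimed.

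No step here is really an obstacle; the only thing to be careful with is the convention for bit-ordering. The lemma uses $r=\sum_p 2^{n-1-p} q_r(p)$, so $q_r(0)$ is the most significant bit and corresponds to the leftmost tensor factor $\sigma_0$, which is precisely the convention needed for the induction on $n$ to go through cleanly. As long as this alignment between qubit index and bit position is stated at the outset, the proof collapses to two lines: invoke the Kronecker-product entry formula, then use that $\mathbb{S}\subset\{0,1\}^{2\times 2}$.
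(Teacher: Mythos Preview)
Your proposal is correct and follows essentially the same approach as the paper: both proceed by induction on $n$, peeling off the leading factor $\sigma_0$ and recursing on $\sigma_1\otimes\cdots\otimes\sigma_{n-1}$, with the top bit of $r$ (and $c$) selecting the block of $\sigma_0$. The only cosmetic difference is that you factor the argument as ``establish the Kronecker-product entry formula, then use that $\mathbb{S}\subset\{0,1\}^{2\times 2}$,'' whereas the paper fuses these two steps and argues the biconditional directly via the block structure; the underlying induction is identical.
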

\begin{proof}
    Let us prove the forward direction by induction on $n$. The result is trivially true for $n=1$, since $\Am_l^{(1)} \coloneqq \Am_l = \sigma_0$ and there is a one-to-one correspondence between the rows/ columns of $\Am_l$ and $\sigma_0$. Assume that the result holds true for $n-1$, i.e., for $
    \Am_l^{(n-1)} \coloneqq \otimes_{p=0}^{n-2} \sigma_p$,
    Then for $n$,  
    \begin{align*}
    \Am_l^{(n)} &= \sigma_{0} \otimes  \sigma_{1} \dots \otimes \sigma_{n-1}
    = \sigma_{0} \otimes \Am_l^{(n-1)}.
    \end{align*}
    The following relations hold:
    \begin{align*}
        \sigma_{0}(0,0)=1 &\implies \Am_l^{(n)} = \begin{bmatrix}
        \Am_l^{(n-1)} & \star \\ \star & \star 
    \end{bmatrix}, \\
    \sigma_{0}(0,1)=1 &\implies \Am_l^{(n)} = \begin{bmatrix}
        \star & \Am_l^{(n-1)} \\ \star & \star 
    \end{bmatrix}, \\
    \sigma_{0}(0,1)=1 &\implies \Am_l^{(n)} = \begin{bmatrix}
        \star & \star \\ \Am_l^{(n-1)} & \star 
    \end{bmatrix}, \\
    \sigma_{0}(1,1)=1 &\implies \Am_l^{(n)} = \begin{bmatrix}
        \star & \star \\ \star & \Am_l^{(n-1)} 
    \end{bmatrix}. 
    \end{align*}
    Therefore, $\Am_l^{(n)}(r,c)=1$ implies that $\sigma_{0}(q_r(0), q_c(0))=1$ and $\Am_l^{(n-1)}(r',c')=1$ where
    \begin{align*}
        r &= 2^{n-1}q_r(0) + r'\\
        c &= 2^{n-1}q_c(0) + c'.
    \end{align*}
    Hence proved by using the induction hypothesis. The other direction of the if and only if statement can be similarly proved using induction and we leave it as an exercise for the readers.
\end{proof}

\begin{theorem}
\label{thm: circuit_U}
${\Um}_l$  as defined in (\ref{eq: Um_l}) can be implemented using at most $n+1$ single qubit gates and a $C^m X$ gate where $m \leq n$.
\end{theorem}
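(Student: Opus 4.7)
The resource count follows by inspection of~\Cref{circuitalg}: the layer implementing $\bar\Am_l$ contributes $n$ single-qubit gates ($\bar\sigma_k \in \{\sigma_x, \Id\}$), the $X$ on the ancilla adds one more, and the controlled step produces a single $C^{n-k}X$ with $n-k \le n$. So the substance of the proof is correctness: showing that the circuit's overall unitary
\[ V = C^{n-k}X \cdot (\sigma_x \otimes \bar\Am_l) \]
equals $\Um_l$ from~\Cref{eq: Um_l}. Since $\Am_l^c = \bar\Am_l - \Am_l$, my plan is to verify on every computational basis state $\ket{x}$ the two column identities $V\ket{0}\ket{x} = \ket{0}\Am_l\ket{x} + \ket{1}\Am_l^c\ket{x}$ and $V\ket{1}\ket{x} = \ket{0}\Am_l^c\ket{x} + \ket{1}\Am_l\ket{x}$, which together are equivalent to $V = \Um_l$.

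The main analytic tool is~\Cref{lem: Arow}, which shows that $\Am_l\ket{x}$ is either $\bar\Am_l\ket{x}$ or $0$, with the dichotomy determined bit-by-bit by whether each $x_i$ lies in the input support of $\sigma_i$. Combined with the closed form $\bar\sigma_i \in \{\sigma_x,\Id\}$ from~\Cref{thm: Acomp}, the output bit $y_i$ of $y = \bar\Am_l x$ becomes a fixed function of $x_i$. A short case check over the five elements of $\mathbb{S}$ then shows that the in-support condition on $x$ is equivalent to $y_i = 1$ for $\sigma_i \in \{\Sm, \Sm\Sp\}$ and to $y_i = 0$ for $\sigma_i \in \{\Sp, \Sp\Sm\}$; this matches exactly the closed/open-control assignment in the final step of~\Cref{circuitalg}.

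With this correspondence in hand, tracing the circuit on $\ket{0}\ket{x}$ gives $\ket{0}\ket{y}$ after the $\bar\Am_l$ layer, $\ket{1}\ket{y}$ after flipping the ancilla, and after the $C^{n-k}X$ the ancilla flips back to $\ket{0}$ iff all its controls fire on $y$, i.e.\ iff $x$ lies in the support of $\Am_l$. This reproduces the first column of $\Um_l$ (the in-support case contributes $\ket{0}\Am_l\ket{x}$ with $\Am_l^c\ket{x}=0$; the out-of-support case contributes $\ket{1}\Am_l^c\ket{x}$ with $\Am_l\ket{x}=0$). The trace for $\ket{1}\ket{x}$ is identical modulo which ancilla value the controls leave behind, yielding the second column. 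The main obstacle is not conceptual depth but bookkeeping: I would consolidate the per-element data from~\Cref{thm: Acomp} and~\Cref{lem: Arow} into a single table mapping each $\sigma_i$ to its input-support bit, the resulting $y_i$, and its control type, so that both column identities collapse to one-line verifications and no separate induction beyond~\Cref{lem: Arow} is needed.
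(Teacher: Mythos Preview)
Your proposal is correct and uses the same circuit decomposition as the paper, $V = C^{n-k}X \cdot (\sigma_x \otimes \bar\Am_l)$, which the paper writes as $\Um_{l,1}\Um_{l,2}$. The difference lies in how correctness of the $C^{n-k}X$ step is verified. The paper computes the matrix product $\Um_l\Um_{l,2}^T$ explicitly, using the identity $\Am_l^c(\Am_l^c)^T = \Id - \Am_l\Am_l^T$ to obtain the block form $\begin{bmatrix}\Id-\Am_l\Am_l^T & \Am_l\Am_l^T\\ \Am_l\Am_l^T & \Id-\Am_l\Am_l^T\end{bmatrix}$, and then reads off the control pattern from the nonzero rows of the diagonal projector $\Am_l\Am_l^T$ via~\Cref{lem: Arow}. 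You instead trace the circuit on computational basis states, using the dichotomy $\Am_l\ket{x}\in\{0,\bar\Am_l\ket{x}\}$ and checking (per $\sigma_i\in\mathbb{S}$) that the control pattern fires on $\ket{y}=\bar\Am_l\ket{x}$ exactly when $x$ lies in the column support of $\Am_l$. Your route is slightly more elementary: it bypasses the matrix identity for $\Am_l^c(\Am_l^c)^T$ and the two-case split in the paper, at the cost of a small per-element case table. The paper's route is more structural and makes the permutation nature of $\Um_{l,1}$ explicit, which feeds directly into~\Cref{lem:toff} and~\Cref{rem:control_op}; yours hides this inside the basis-state trace. Both are complete.
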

\begin{proof}
${\Um}_l$ can be written as a product of two unitary matrices ${\Um}_{l,1}{\Um}_{l,2}$, i.e. ${\Um}_l={\Um}_{l,1} {\Um}_{l,2}$,   such that 
\begin{align*}
{\Um}_{l,2} &= \sigma_x \otimes \bar{\Am}_l = \begin{bmatrix}
 & \bar{\Am}_l \\
\bar{\Am}_l &  \\
 \end{bmatrix},  \\
 {\Um}_{l,1} &= {\Um}_l {\Um}_{l,2}^T =  \begin{bmatrix}
\Am_l & \Am^c_l \\
\Am^c_l & \Am_l
\end{bmatrix} \begin{bmatrix}
 & \bar{\Am}^T_l \\
\bar{\Am}^T_l &  \\
 \end{bmatrix}  \\
  &=
 \begin{bmatrix}
\Am^c_l (\Am^c)^T_l & \Am_l \Am^T_l \\
\Am_l \Am^T_l & \Am^c_l (\Am^c)^T_l \\
\end{bmatrix} \, (\because \text{by 
 definition})\\
 &=
 \begin{bmatrix}
\Id - \Am_l \Am^T_l & \Am_l \Am^T_l \\
\Am_l \Am^T_l & \Id - \Am_l \Am^T_l
\end{bmatrix}.  
\end{align*}
Using \Cref{thm: Acomp}, ${\Um}_{l,2}$ only involves tensor products of $\sigma_x, \Id$, and thus can be implemented efficiently using single qubit gates. We require at most $n$ single qubit gates to implement $\bar{\Am}_l$ and and one $X$ gate on the ancilla bit.

Note that $\Am_l \Am_l^T$ is a binary diagonal matrix with 1's and 0's at the diagonal
\begin{align*}
\Am_l \Am_l^T &= \otimes_p \sigma_p \sigma_p^T, \quad \sigma_p \sigma_p^T &\in \{\Sp\Sm, \Sm\Sp, \Id \},
\end{align*}
as each term $\sigma_p\sigma_p^T$ is a diagonal matrix. Therefore, ${\Um}_{l,1}$ is a $2^{n+1} \times 2^{n+1}$ permutation matrix. Any permutation matrix can be implemented using only Toffoli gates~\cite{nielsen2010quantum}.  We prove that, only a \textit{single} $C^n X$ gate is required to implement ${\Um}_{l,1}$. 

$\Am_l \Am_l^T$ can have one or more non-zero rows. Let row $r$ of $\Am_l \Am_l^T$ be non-zero, where
\[r = \{q_p(0)\,\dots\,q_p(n-1)\} = \sum_{p=0}^{n-1} 2^{n-1-p} q(p),\]
and $q(p) \in \{0, 1\}$. Using~\Cref{lem: Arow}, we have
\begin{align*}
(\Am_l \Am_l^T)(r,r) &= 1 \quad \text{iff } (\sigma_p \sigma_p^T)(q(p), q(p)) = 1, \forall p, 
\end{align*}
and so
\begin{align}\label{eq:Urows}
\Um_{l,1}(r,r)&= \Um_{l,1}(2^n+r ,2^n+r)=0 \nonumber\\
\Um_{l,1}(r, 2^n+r) &= \Um_{l,1}(2^n+r , r)\hspace{22pt}=1.
\end{align}
Thus, $\Um_{l,1}$ can be constructed by permuting rows $r$ and $2^n + r$ of $\Id_{2^{n+1}}$. Consider, two cases:
\paragraph*{Case 1: $\Am_l \Am_l^T$ has a single non-zero row} 
Using ~\Cref{lem:toff}, the two rows can be permuted using a $C^n X$ gate.
\paragraph*{Case 2: $\Am_l \Am_l^T$ has multiple non-zero rows} 
This is possible only when one or more of the factors  $\sigma_p = \Id$ (using~\Cref{lem: Arow}). WLOG, assume that one of the factors, say $\sigma_{n-1} =\Id$. Then the two non-zero rows are
\begin{align*}
    r_1 &= \{q_{r}(0)\,\dots\, q_{r}(n-2)\,0\} = \sum_{p=0}^{n-2} 2^{n-1-p}q_{r}(p), \\
    r_2 &=\{q_{r}(0)\,\dots\, q_{r}(n-2)\,1\} = r_1 + 1.
\end{align*}
Naively, two $C^n X$ gates are needed to permute $r_1$ with $2^n + r_1$ and $r_2$ with $2^n + r_2$. However, since the binary representations of $r_1$ and $r_2$ differ only in the $n$-th bit, the two $C^n X$ gates only differ in the control operation on the $n$-th qubit (by~\Cref{rem:control_op}). Hence, the two $C^n X$ gates can be combined into a singe $C^{n-1} X$ gate with the qubits $q_0, q_1, \dots, q_{n-2}$ acting as control qubits (no control operation on $q_{n-1}$) and target on $a_0$ (see example below). Note, that there is no operation on the $q_{n-1}$. In general, if $k$ factors are $\Id$, then there are $2^k$ non-zero rows and we would require a $C^{n-k} X$ gate with $n-k$ control qubits and one target qubit on $a_0$.
\end{proof}

\paragraph*{How to choose the control operations in $C^n X$ gate?} The control operations of $C^n X$ gate are determined using the tensor decomposition of $\Am_l = \otimes_p \sigma_p$ as follows. Using~\Cref{lem: Arow}, the non-zero rows of $\Am_l\Am_l^T$ can be identified:
\[
    (\Am_l \Am_l^T)(r,r) = 1 \quad \text{iff } (\sigma_p \sigma_p^T)(q(p),q(p)) = 1, \forall p,
\]
where $r = \sum_{p=0}^{n-1}2^{n-1-p}q(p)$. We can use~\Cref{rem:control_op} to determine the control operations:
\begin{align*}
\sigma_p \in \{\Sp, \Sp\Sm\} &\implies 
    q(p)=0 \\
    &\implies \sigma_p \sigma_p^T = \begin{bmatrix}
1 & 0\\
0 & \star
\end{bmatrix} \\
 &\implies \text{open-control on $q_p$}, \\
\sigma_p \in \{\Sm, \Sm\Sp\} &\implies q(p)=1 \\ &\implies \sigma_p \sigma_p^T = \begin{bmatrix}
\star & 0\\
0 & 1
\end{bmatrix} \\
& \implies \text{closed-control on $q_p$},
\end{align*}
where $q_p$ is the corresponding qubit index. 
If there are $k$ factors $\sigma_p=\Id$, then there will be $2^k$ non-zeros rows in  $\Am_l\Am_l^T$. In this case, no control operation is necessary on the corresponding $k$ qubits and the $C^n X$ gate can be simplified to a $C^{n-k} X$ gate. The target is always on the ancilla bit as we want to permute rows that differ in the most significant bit (see~\Cref{eq:Urows}). The template circuit for $\Um_l$ is shown in~\Cref{fig:Uml_circuit}.

\begin{remark}
    The proof of~\Cref{thm: circuit_U} also provides the circuit construction of $\Um_l$ as described in~\Cref{circuitalg}. Steps 1-3 of the algorithm implement $\mathbf{U}_{l,2}$ while steps 4-5 implement $\mathbf{U}_{l,1}$.
\end{remark}

\begin{figure}
    \centering
\begin{quantikz}[wire types = { q, q, q, n, q, q}, transparent]
\lstick{$a_0$ } & &         \gate{X} \wire[d][1]{q}\gategroup[6,steps=2,style={dashed,rounded
corners,fill=blue!20, inner
xsep=2pt},background,  label style={label
position=below,anchor=north,yshift=-0.2cm}]{{\sc ${\Um}_l$}}  & \targ{} &&  \\
\lstick{$q_0$ }& & \gate[5]{\bar{\Am}_l} &  \octrl{-1}  &  & \\
\lstick{$q_1$ } &  & &\ctrl{-1}  & & \\
\lstick{\vdots} && & \vdots&   &\\
\lstick{$q_{n-2}$ }&  &&\octrl{1}  && \\
\lstick{$q_{n-1}$ }& &  &\ctrl{-1} & & 
\end{quantikz}
\caption{Circuit for $\Um_l$ for an given $\Am_l = \otimes_i \sigma_i$ for $\sigma_i \in \mathbb{S}$. The completion operator is $\overline{\Am}_l=\otimes_i \bar{\sigma}_i$, where $\bar{\sigma}_i \in \{\Id, \sigma_x\}$ and is defined as in~\Cref{thm: Acomp}.}
    \label{fig:Uml_circuit}
\end{figure}

\textit{Revisiting our example:} Consider the example three qubit system used in~\Cref{subsec:Um}, $\Am_l = \Sm \otimes \Id \otimes \Sp\Sm$. Steps 1-3 of the algorithm implement $\mathbf{U}_{l,2}$ trivially. Let's expand on steps 4-5. There are two non-zero rows $r_1, r_2$ in $\Am_l \Am_l^T$ where
$
r_1 = \{0100\} \text{ and } r_2 = \{0110\}. 
$ Note that most significant bit corresponds to the ancilla qubit. The binary representation of these rows differ in the $q_1$ bit as the corresponding factor in $\Am_l$ is $\Id$ and corresponds to case 2 in the proof of~\Cref{thm: circuit_U}. In order to construct $\Um_{l,1}$ (on 4-qubit), two $C^3 X$ gates are needed to permute
\begin{align*}
    r_1 = \{0100\} &\leftrightarrow r_1' = \{1100\}, \\
    r_2 = \{0110\} &\leftrightarrow r_2' = \{1110\}.
\end{align*} 
The first permutation ($r_1 \leftrightarrow r_1'$) can be done using a $C^3 X$ with closed control on $q_0$ and open control on $q_1, q_2$. The second permutation ($r_2 \leftrightarrow r_2'$) can be done with another $C^3 X$ with closed control on $q_0,q_1$ and open control on $q_2$. The target is always on the ancilla  $a_0$. However, as the two $C^3 X$ gates differ only in the control operation on $q_1$, they can be effectively represented by a single $C^2 X$ gate as shown in~\Cref{fig: ex_toff}. This can easily verified by constructing truth tables as exemplified in~\Cref{app: ccnot}.
\begin{figure}[htbp]
    \centering
    \begin{quantikz}[wire types = {q, q, q, q}, transparent]
\lstick{$a_0$} & \targ{}  & \targ{} & \\
\lstick{$q_0$} &  \ctrl{-1} & \ctrl{-1} &\\
\lstick{$q_1$}   & \octrl{-1}& \ctrl{-1} &\\
\lstick{$q_2$}   & \octrl{-1}  &  \octrl{-1} &
\end{quantikz} $\equiv $
\begin{quantikz}[wire types = {q, q, q, q}, transparent]
\lstick{$a_0$} & \targ{}  & \\
\lstick{$q_0$} &  \ctrl{-1} &\\
\lstick{$q_1$}   &  &\\
\lstick{$q_2$}   & \octrl{-2}   &
\end{quantikz} 
    \caption{Two $C^3X$ gates that differ only in the control operation on one qubit can be effectively combined into a single $C^2 X$ gate.}
    \label{fig: ex_toff}
\end{figure}

\section{Techniques for computing sigma decomposition of a complex matrix $A$}\label{sec: methoddecomp}

For sparse matrices with structured nonzero patterns, decomposing over the Sigma basis can be more advantageous than using the standard Pauli basis. Since Pauli matrices are Hermitian, the number of LCU terms can exceed the number of nonzero entries $\text{nnz}(A)$, as illustrated in~\Cref{rem: example_spmat}.  In contrast, the Sigma basis decomposition guarantees an upper bound of $\text{nnz}(A)$. For sparse matrices where $\text{nnz}(A)\ll N$, we recommend using the numerical approach to compute the decomposition. 

\textbf{Numerical Approach:} Express $\Am$ as a sum of matrices $\tilde{\Am}_i\in \mathbb{C}^{2^n\times 2^n}$ each with a single non-zero entry, i.e.,
\begin{equation}\label{eq:sigentry}
\Am=\sum_{i=1}^{\text{nnz}(A)}\tilde{\Am}_i.
\end{equation} 
Then by invoking the~\Cref{thm:sigma_decomp}, $\tilde{\Am}_i$ can be represented as a single tensor product term over the Sigma basis. Note that since $\Id\in \mathbb{S}$ is not used in the tensor product, this approach may not produce a decomposition with the minimum number of terms. 

\begin{theorem}\label{thm:sigma_decomp}
Consider $\Am \in \mathbb{C}^{2^n \times 2^n}$ with a single nonzero entry in position $(r,c)$ with $r = \sum_{p=0}^{n-1} 2^{n-1-p} q_r(p)$ and $c = \sum_{p=0}^{n-1} 2^{n-1-p} q_c(p)$.  Then the LCNU decomposition of $\Am$ is given by $\Am = \Am(r,c) \left(\otimes_p \sigma (q_r(p), q_c(p))\right)$ where
\begin{align*}
\sigma (q_r(p), q_c(p)) &= \begin{cases}
  \Sp\Sm  &\hspace{-3pt}, q_r(p)=0,q_c(p)=0 \\
   \Sp &\hspace{-3pt}, q_r(p)=0,q_c(p)=1 \\
   \Sm &\hspace{-3pt}, q_r(p)=1,q_c(p)=0\\
  \Sm\Sp &\hspace{-3pt}, q_r(p)=1,q_c(p)=1
\end{cases}
\end{align*} 
\end{theorem}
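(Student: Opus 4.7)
The plan is to reduce the statement to the elementary identity
\[
\ket{q_r(0)\cdots q_r(n-1)}\bra{q_c(0)\cdots q_c(n-1)} = \ket{r}\bra{c},
\]
and then verify that each $\sigma(q_r(p),q_c(p))$ in the theorem is precisely the single-qubit outer product $\ket{q_r(p)}\bra{q_c(p)}$.

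First I would observe, by inspection of the definitions in \Cref{def:S}, that
\[
\ket{0}\bra{0}=\Sp\Sm,\quad \ket{0}\bra{1}=\Sp,\quad \ket{1}\bra{0}=\Sm,\quad \ket{1}\bra{1}=\Sm\Sp,
\]
so the four cases in the theorem statement can be written uniformly as $\sigma(q_r(p),q_c(p)) = \ket{q_r(p)}\bra{q_c(p)}$. Next, using the mixed-product property $(\ket{a}\bra{b})\otimes(\ket{c}\bra{d})=(\ket{a}\otimes\ket{c})(\bra{b}\otimes\bra{d})$ inductively across the $n$ tensor factors, I would conclude
\[
\bigotimes_{p=0}^{n-1}\sigma(q_r(p),q_c(p)) = \Big(\bigotimes_{p=0}^{n-1}\ket{q_r(p)}\Big)\Big(\bigotimes_{p=0}^{n-1}\bra{q_c(p)}\Big) = \ket{r}\bra{c},
\]
where the last equality is just the standard identification of the computational basis of $(\mathbb{C}^2)^{\otimes n}$ with $\{0,1,\dots,2^n-1\}$ via the binary expansions $r=\sum_p 2^{n-1-p}q_r(p)$ and $c=\sum_p 2^{n-1-p}q_c(p)$.

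Finally, since $\Am$ has a single nonzero entry $\Am(r,c)$ in position $(r,c)$, by definition $\Am = \Am(r,c)\,\ket{r}\bra{c}$. Substituting the tensor-product expression above gives the claimed identity. As a sanity check, one can instead invoke \Cref{lem: Arow}: both sides agree in the $(r,c)$ entry by the characterization $\sigma_p(q_r(p),q_c(p))=1$ for every $p$, and both sides vanish at every other entry $(r',c')$ since at least one binary coordinate of $(r',c')$ disagrees with $(r,c)$, forcing the corresponding single-qubit outer product to be zero there.

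I do not anticipate a serious obstacle: the statement is essentially unpacking the definition of the Sigma basis as a set of computational-basis outer products and invoking the tensor-product identity. The only bookkeeping subtlety is ensuring the indexing convention (most significant bit corresponds to $p=0$) is used consistently in $r$, $c$, and the order of the tensor factors, which is already fixed by the hypothesis of the theorem.
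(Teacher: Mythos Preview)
Your proof is correct and aligns with the paper's own argument, which simply states that the theorem is a direct implication of \Cref{lem: Arow}. Your explicit unpacking via the outer-product identity $\sigma(a,b)=\ket{a}\bra{b}$ together with the mixed-product rule is precisely the content underlying that lemma, so the two approaches coincide.
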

\begin{proof}
This is a direct implication of~\Cref{lem: Arow} and is only restated here for clarity in the given context. 
\end{proof}

For a full-rank matrix, $\text{nnz}(A)\geq N$. In order to get an ideal $\mathcal{O}(\text{poly}\log N)$ number of LCNU terms, the matrix typically needs to have a recursive/ telescoping pattern with \textit{few} non-zero entries ($\ll N$) that stray away from the recursive structure. This is illustrated in~\Cref{fig: recursive}. There is no general recipe for determining such patterns and one has to proceed on a case-by-case basis based on the structure/sparsity of the given matrix. We provide a walkthrough of the semi-analytical approach, combining the recursive structure identification and the numerical approach next. 

\begin{figure*}
    \centering
\[
\begin{array}{ccc}
\Am^{(2)} =
\begin{bNiceArray}{cw{c}{1em}cw{c}{1em}}[
    margin,
]
  \star & \star \\
  \star & \star 
\end{bNiceArray}
 &  \quad
\Am^{(4)} =
\begin{bNiceArray}{cw{c}{1em}cw{c}{1em}cw{c}{1em}cw{c}{1em}}[
    margin,
    vlines={3},   
]
  \Block{2-2}{\Am^{(2)}} &  &  &  \\
                       &  & \star &  \\ 
  \hline
   & \star & 
  \Block{2-2}{\Am^{(2)}} & \\
   &  &          &  
\end{bNiceArray}
& \quad \Am^{(8)} =
\begin{bNiceArray}{cw{c}{1em}cw{c}{1em}cw{c}{1em}cw{c}{1em}cw{c}{1em}cw{c}{1em}cw{c}{1em}cw{c}{1em}}[
    margin,
    vlines={5},   
]
  \Block{4-4}{\Am^{(4)}} &  & & &  & & & \\
                       &  & & &  & & & \\
                      &  & & &  & & & \\ 
                     &  & & & \star & & & \\ 
                       
  \hline
   & & & \star & 
  \Block{4-4}{\Am^{(4)}} & & & \\
                       &  & & &  & & & \\
                       &  & & &  & & & \\
                       &  & & &  & & & \\
\end{bNiceArray} \\ \\
L^{(2)} \leq 4 & \quad L^{(4)} = L^{(2)} + 2  & \quad L^{(8)} = L^{(4)}+2
\end{array}
\]
    \caption{Illustration of a recursive/ telescoping matrix structure that leads to a $\mathcal{O}(\log N)$ number of LCNU terms. At each level of the recursion, there are only 2 non-zero entries that stray outside of the recursive pattern and can be handled efficiently with the numerical approach. $A^{(n)}$ and $L^{(n)}$ indicate the matrix and number of LCNU terms for matrix size $n$. }
    \label{fig: recursive}
\end{figure*}

\textbf{Semi-analytical Approach:} 
Consider a linear system of ordinary differential equations (ODEs),
\begin{equation*}
\frac{d\uv}{dt} = \Am^\prime\uv+\bv^\prime,
\end{equation*}
where, $\uv(t)\in \Rr^{n_x}, t\in[0,T]$, $T\in \Rr$ is the period of integration, $\Am^\prime\in \Rr^{n_x\times n_x}$ is the system matrix, $\uv(0)=\uv_0$ is the initial condition, and $\bv^\prime\in \Rr^{n_x}$ is a constant forcing term. Using explicit Euler time discretization with step size $\Delta t$, the above ODEs can be expressed as a system of difference equations 
\begin{equation*}
\uv_{k+1} =(\Id_{n_x}+\Delta t\Am^\prime)\uv_k+\Delta t\bv^\prime, k=1,\cdots,n_t-1,
\end{equation*}
where $\uv_k=\uv((k-1)\Delta t)$ with $\Delta t=T/(n_t-1)$. The above system can be represented as an extended system of linear equations
\begin{equation*}
\Am \uv = \bv,
\end{equation*}
where
\begin{align}\label{eq: full_coeffod}
    \uv &= \begin{bmatrix} 
\uv_1^T & \uv_2^T & \dots & \uv^T_{n_t}
\end{bmatrix}^T, \nonumber\\
\bv &= \begin{bmatrix}
\uv_0^T & \Delta t(\bv^\prime)^T & \dots & \Delta t (\bv^\prime)^T
\end{bmatrix}^T, \nonumber\\
\Am &= \begin{bmatrix*}[r]
\Id_{n_x} & & & 0\\
-\Id_{n_x} & \Id_{n_x} & & \\
& \ddots & \ddots & \\
0 & & -\Id_{n_x} & \Id_{n_x}
\end{bmatrix*}- \Delta t \begin{bmatrix*}[r]
0 & & & 0\\
\Am^\prime & 0 & & \\
& \ddots & \ddots & \\
0 & & \Am^\prime & 0
\end{bmatrix*} \notag\\
&= \Am_1 - \Delta t \Am_2.
\end{align}

For simplicity, assume $n_x= 2^s$ and $n_t = 2^t$. Then $\Am_1$ corresponds to $s+t$ qubits and can be written as
\begin{align*}
 \Am_1 \coloneqq \Am_1^{(s+t)} &= \begin{bmatrix}
 \Am_1^{(s+t-1)} & 0 \\
 \Dm_1^{(s+t-1)} & \Am_1^{(s+t-1)}
 \end{bmatrix}, \\
\Dm_1^{(s+t-1)} &= \begin{bmatrix}
0 & \dots & -\Id_{n_x} \\
\vdots & \ddots & \vdots \\
0 & \dots & 0
\end{bmatrix}\\ &= -\underbrace{\Sp \otimes \dots \otimes \Sp}_{t-1 \text{ times}} \otimes \Id_{n_x}.
\end{align*} 
The expression for $\Dm_1^{(s+t-1)}$ is obtained using the numerical approach. Then
\begin{align*}
\Am_1^{(s+t)} &= \Id_2 \otimes \Am_1^{(s+t-1)} + \Sm \otimes \Dm_1^{(s+t-1)}, \\
\Am_1^{(s+1)} &= \begin{bmatrix*}[r]
\Id_{n_x} & 0\\
-\Id_{n_x} & \Id_{n_x}
\end{bmatrix*} = \Id_2 \otimes \Id_{n_x} - \Sm \otimes \Id_{n_x}. 
\end{align*}
With this recursive relation, $\Am_1$ can be written using $\log n_t + 1$ terms. To obtain decomposition of $\Am_2$, we equivalently write it as
\begin{align}
\Am_2=\tilde{\Id}\otimes \Am^\prime,
\end{align}
where
\begin{align*}
\tilde{\Id}=\begin{bmatrix*}[r]
0 & & & 0\\
1 & 0 & & \\
& \ddots & \ddots & \\
0 & & 1 & 0
\end{bmatrix*}\in \Rr^{n_t\times n_t}.
\end{align*}
Applying a similar recursive procedure to $\tilde{\Id}$ as for $\Am_1$ above, we can express it as
\begin{align*}
\tilde{\Id} \coloneqq \tilde{\Id}^{(t)} &= \begin{bmatrix}
 \tilde{\Id}^{(t-1)} & 0 \\
 \tilde{\Dm}^{(t-1)} & \tilde{\Id}^{(t-1)}
 \end{bmatrix}, \\
\tilde{\Dm}^{(t-1)} &= \begin{bmatrix}
0 & \dots & 1\\
\vdots & \ddots & \vdots \\
0 & \dots & 0
\end{bmatrix}= \underbrace{\Sp \otimes \dots \otimes \Sp}_{t-1 \text{ times}}.
\end{align*} 
Then
\begin{align*}
\tilde{\Id}^{(t)} &= \Id_2 \otimes \tilde{\Id}^{(t-1)} + \Sm \otimes \tilde{\Dm}^{(t-1)}, \quad \tilde{\Id}^{(1)} = \Sm. 
\end{align*}
Finally, depending on the exact structure of $\Am'$, we can similarly apply a semi-analytical or numerical approach to compute its decomposition. Combining it with the decomposition of $\Am_1$ and $\tilde{\Id}$ as derived above, we  obtain the complete decomposition of $\Am$ with $L \leq (1+\text{nnz}(\Am'))(\log n_t + 1)$.

\section{Applications to Partial Differential Equations}\label{sec: apps}

\begin{figure*}[htbp]
\centering
\begin{subfigure}[b]{0.323\textwidth}
\includegraphics[trim={0 0 0 0},clip,width=\linewidth]{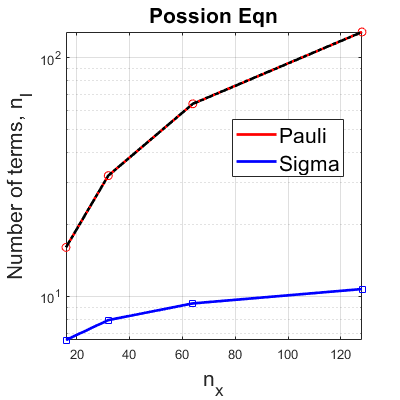}
\caption{Possion PDE}
\end{subfigure}
\begin{subfigure}[b]{0.32\textwidth}
\includegraphics[width=\linewidth]{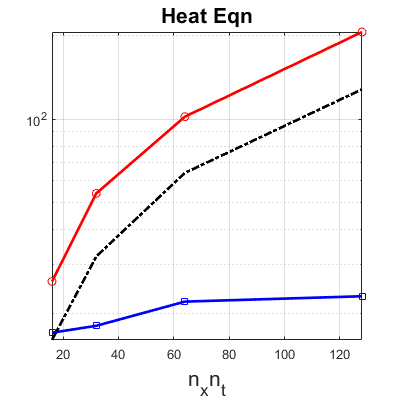}
\caption{Heat PDE}
\end{subfigure}
\begin{subfigure}[b]{0.32\textwidth}
\includegraphics[width=\linewidth]{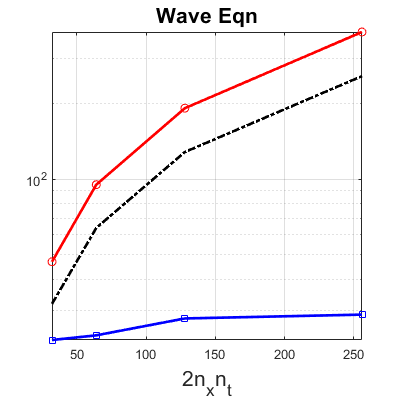}
\caption{Wave PDE}
\end{subfigure}
    
\caption{Comparison of number of terms in the Sigma basis and Pauli basis decomposition for matrices arising from discretization of 1D Possion, heat and wave PDEs. The black dotted line is $y=x$. For Possion PDE the number of Pauli terms grow linearly with $n_x$, while for the heat and wave PDE, they grow faster than linear. For Possion PDE we used  $n_x=16,32,64,128$, while for the heat and wave PDE we used $n_x(n_t)=
4(4), 4(8), 8(8), 8(16)$.} \label{fig: lcucomp}
\end{figure*}

It is well known that the standard LCU decomposition using Pauli basis is \textit{not} efficient for matrices arising from the discretization of linear PDEs.  Typically, these matrices are sparse and have a structued pattern of non-zero entries. Our LCNU technique using Sigma basis is best suited for such problems. 

We illustrate the applicability of our method by considering concrete examples of Poisson (elliptic), heat (parabolic), and wave (hyperbolic) equations in 1D.
For these examples, we apply the semi-analytical approach described in~\Cref{sec: methoddecomp} to obtain the LCNU decomposition with poly-logarithmic number of terms. We also compare our method with the standard Pauli basis LCU computed numerically using the matrix splicing method \cite{lcunum}. These comparisons indicate that the Sigma basis decomposition is \textit{exponentially} more compact/efficient compared to the Pauli basis decomposition for matrices arising from PDE discretizations. We also discuss generalization of these results for PDEs in higher dimensions.

\subsection{Elliptic PDE: Poisson Equation}\label{sec: ellipPDE}

Consider the 1D Poisson equation  defined over the domain $\Omega=[0,\, l]$ with  Dirichlet boundary conditions,
\begin{align*}
&-\frac{\partial^2u(x)}{\partial x^2}=f(x), \quad x\in \Omega\\
& u(0)=0,\quad u(l)=0.
\end{align*}
Using a second-order finite differing scheme, the discretized solution $\uv=(u(x_1),\cdots,u(x_{n_x}))^T$ over a finite grid $x_i=i\Delta x,i=1,\cdots,n_x$ with $\Delta x=l/(n_x+1)$, can be expressed as a linear system $\Am_e \uv=\bv$, where
\begin{align}\label{eq:possionA}
\Am_e &= \begin{bmatrix*}[r]
 2 & -1 &  &  & 0 \\
       -1 & 2 & -1 &  &  \\
       & \ddots & \ddots & \ddots &  \\
       &  & \ddots  & 2 & -1 \\
      0 &  &  & -1 & 2 \\
\end{bmatrix*}\in \Rr^{n_x\times n_x},
\end{align}
and $\bv=(\Delta x)^2(f(x_1),\cdots,f(x_{n_x}))^T$. For simplicity, let us assume that $n_x= 2^s$. We can write a recursive decomposition for $\Am_e$ in Sigma basis as follows
\begin{align}\label{eq: sigma_x}
\Am_e \coloneqq \Am^{(s)} &= \Id_2 \otimes \Am^{(s-1)} + \Sm \otimes \Dm^{(s-1)} \nonumber\\
&\qquad + \Sp \otimes (\Dm^{(s-1)})^T, \nonumber\\
\Dm^{(s-1)} &= -\underbrace{\Sp \otimes \dots \otimes \Sp}_{s-1 \text{ times}}, \nonumber\\
\Am^{(1)} &= 2\,\Id_2 - \Sm - \Sp.
\end{align}
Thus, $\Am_e$ can be decomposed into $2s+1=2\log n_x+1$ terms, which logarithmically depend on the grid size $n_x$.~\Cref{fig: lcucomp}a illustrates the exponential reduction in the number of terms with the Sigma basis as compared to the Pauli basis as a function of the grid size $n_x$. 

As shown in \cite{liu2021variational}, a similar logarithmic dependence on $n_x$ holds for 1D Poisson equation with Neuman and Robin boundary conditions as well. The result also extends to $d$-dimensional Poisson PDE with Dirichlet boundary condition. 

\subsection{Parabolic PDE: Heat Equation}\label{sec: paraPDE}
\label{subsec: heat}
Consider the 1D heat equation over the domain $[0,\, l]$ with Neumann boundary conditions,
\begin{align}
\frac{\partial u}{\partial t}&=\alpha \frac{\partial^2 u}{\partial x^2},\nonumber\\
  -k\frac{\partial u}{\partial x}\bigg|_{x=0} = q, &\,
  -k\frac{\partial u}{\partial x}\bigg|_{x=l} = 0, \nonumber\\
   u(x,0)&=u_0(x),
\end{align}
where $\alpha$ is the thermal diffusivity, $k$ is the thermal conductivity of the material and $q$ is a constant heat flux. We discretize the PDE in space with second-order accuracy, and the boundary condition with first-order accuracy, leading to a system of first-order ODEs
\begin{equation}\label{eq: heatode}
\frac{d \uv(t)}{d t}=\frac{\alpha}{(\Delta x)^2}\Am_p \uv(t),
\end{equation}
where $\uv(t)=\begin{bmatrix}
u(x_1,t) & u(x_2,t) & \dots & u(x_{n_x},t)
\end{bmatrix}^T$ with $x_i=(i-1)\Delta x$, $i=1,\cdots,n_x$, $n_x$ is the number of spatial grid points, $\Delta x=l/n_x$ is the spatial grid size,  $\ev_1=\begin{bmatrix} 1 & 0 & 0 \cdots 0 \end{bmatrix}^T\in \mathbb{R}^{n_x}$, and $\Am_p$ is a $n_x \times n_x$ matrix of the form
\begin{align}\label{eq: heatA}
\Am_p &= \begin{bmatrix*}[r]
 -1 & 1 &  &  & 0 \\
       1 & -2 & 1 &  &  \\
       & \ddots & \ddots & \ddots &  \\
       &  & \ddots  & -2 & 1 \\
      0 &  &  & 1 & -1 \\
\end{bmatrix*}.
\end{align}
Backward Euler scheme is used for the time discretization of~\Cref{eq: heatode}, which leads to a system of difference equations of the form
\begin{align}
\label{eq: diff}
\bigg(\Id_{n_x}-\frac{\alpha \Delta t}{(\Delta x)^2} \Am_p\bigg)\uv_{k+1}= \uv_k+ \frac{q\,\Delta t}{k\Delta x}\ev_1,
\end{align}
where $\uv_k=\uv((k-1)\Delta t)$, $k=1,\cdots,n_t$ with $\Delta t=T/(n_t-1)$ being the temporal grid size. We can express the difference equations of the form~\Cref{eq: diff}  into a single linear system $\Am_h \uv = \bv$, where, $\uv = \begin{bmatrix}
\uv_1^T & \uv_2^T & \dots & \uv^T_{n_t}
\end{bmatrix}^T$, $\bv = \begin{bmatrix}
\uv_0^T & \frac{q\Delta t}{k \Delta x}\ev^T_1 & \dots & \frac{q\Delta t}{k \Delta x}\ev^T_1
\end{bmatrix}^T$ and 
\begin{align}\label{eq: full_coeff}
\Am_h &= \begin{bmatrix*}[r]
\Id_{n_x} & & & 0\\
-\Id_{n_x} & \Id_{n_x} & & \\
& \ddots & \ddots & \\
0 & & -\Id_{n_x} & \Id_{n_x}
\end{bmatrix*} \nonumber \\
&- \frac{\alpha \Delta t}{\Delta x^2} \begin{bmatrix}
0 & & & \\
 & \Am_p & & \\
 &&\ddots & \\
 && &\Am_p
\end{bmatrix} = \Am_1 - \frac{\alpha \Delta t}{\Delta x^2}  \Am_2.
\end{align}
For simplicity, let us assume that $n_x= 2^s$ and $n_t = 2^t$. Then $\Am_1$ can be written using $\log n_t + 1$ terms as shown in~\Cref{sec: methoddecomp}.
Next, $\Am_2$ can be written as
\begin{align*}
\Am_2 &= \Id_{n_t} \otimes \Am_p - \Sp\Sm \otimes \dots \otimes \Sp\Sm \otimes \Am_p,
\end{align*}
with
\begin{align}\label{eq: heat_bc}
\Am_p &= \begin{bmatrix*}[r]
 -2 & 1 &  &   0 \\
       1 & \ddots &\ddots &    \\
       & \ddots & \ddots & 1   \\
      0 &    & 1 & -2 \\
\end{bmatrix*}  + \begin{bmatrix*}[r]
 1 &  &  &    \\
        &0 & &    \\
       &  & \ddots &    \\
       &   &  & 1 \\
\end{bmatrix*} \nonumber \\
&=\Am_{p1} + \Am_{p2}. 
\end{align}
$\Am_{p2}$ is simply two terms $\Sp\Sm \otimes \dots \otimes \Sp\Sm + \Sm\Sp \otimes \dots \otimes \Sm\Sp$. 
Since $\Am_{p1}=-\Am_e$ where $\Am_e$ is defined in~\Cref{eq:possionA}, we can use the same recursive procedure to obtain the decomposition of $\Am_{p1}$. Thus, we can write $\Am_p$ using $2\, \log n_x +3$ terms and, hence, $\Am_2$ using $4 \,\log n_x + 6$ terms. Figure~\ref{fig: lcucomp}b compares the number of terms for the Pauli and Sigma basis decomposition, again showing the significant efficiency obtained by using Sigma basis.

We obtain a similar decomposition given a Robin boundary condition of the form $w_1 u(x,t) + w_2 \frac{\partial u}{\partial x} = q$ with the only difference being the non-zero entries of $\Am_{p2}$ taking the value $w_2 / (w_1 \Delta x + w_2)$. Note that, the Neumann and Dirichlet boundary conditions can be obtained by setting $w_1$ and $w_2$ to zero, respectively.

\subsection{Hyperbolic PDE: Wave Equation}
Consider the 1D wave equation over the domain $[0,l]$ with Neumann boundary conditions,
\begin{align*}
& \frac{\partial^2 w(x,t)}{\partial t^2}=c^2\frac{\partial ^2 w(x,t)}{\partial x^2}, \\
& \frac{\partial w(x,t)}{\partial x}\bigg|_{x=0}=0,\quad \frac{\partial w(x,t)}{\partial x}\bigg|_{x=l}=0,
\end{align*}
and initial conditions,
\begin{equation*}
w(x,0)=f_1(x), \quad \frac{\partial w(x,t)}{\partial t}\bigg|_{t=0}=f_2(x).
\end{equation*}
Following similar spatial discretization procedure as for the 1D heat equation, we get a system of second-order ODEs
\begin{equation}\label{eq: wave2ode}
\frac{d^2 \wv(t)}{d t^2}=\frac{c^2}{(\Delta x)^2}\Am_p \wv(t),
\end{equation}
where $\Am_p$ is as defined in~\Cref{eq: heatA}, $\wv(t) = \begin{bmatrix}
w(x_1,t) & w(x_2,t) & \dots & w(x_{n_x},t)
\end{bmatrix}^T$,  $x_i=(i-1)\Delta x, i=1,\cdots,n_x$ with $\Delta x=l/n_x$, and initial condition
\begin{equation}\label{eq: waveinit}
\wv(0)=\fv_1, \quad \frac{d \wv(t)}{d t}\bigg|_{t=0}=\fv_2,
\end{equation}
with $\fv_1=\begin{bmatrix}f_1(x_1),\cdots,f_1(x_{n_x})\end{bmatrix}^T$ and 
$\fv_2=\begin{bmatrix}f_2(x_1),\cdots,f_2(x_{n_x})\end{bmatrix}^T$. Introducing
\begin{equation*}
\vv(t)=\frac{d \wv(t)}{d t},\quad \uv(t)=\begin{bmatrix}
          \wv(t)  \\
          \vv(t)\\
      \end{bmatrix}\in \Rr^{2n_x},
\end{equation*}
we express~\Cref{eq: wave2ode} as a system of first order ODEs
\begin{equation}
\frac{d \uv(t)}{d t}=\tilde{\Am}_p \uv(t),
\end{equation}
with initial condition $\uv(0)=\begin{bmatrix}\fv_1^T,\fv_2^T\end{bmatrix}^T$  and 
\begin{align}\label{eq: waveA}
&\tilde{\Am}_p=\begin{bmatrix}
          0 & \Id_{n_x} \\
          \frac{c^2}{(\Delta x)^2}\Am_p & 0 \\
      \end{bmatrix}\in \Rr^{2n_x\times 2n_x}.
\end{align}
Finally, following the temporal discretization steps as in the previous section, we obtain a linear system $\Am_w\uv=\bv$ where
\begin{align}\label{eq: full_coeffwave}
\Am_w &= \begin{bmatrix*}[r]
\Id_{2n_x} & & & 0\\
-\Id_{2n_x} & \Id_{2n_x} & & \\
& \ddots & \ddots & \\
0 & & -\Id_{2n_x} & \Id_{2n_x}
\end{bmatrix*} \nonumber \\
&- \Delta t \begin{bmatrix}
0 & & & \\
 & \tilde{\Am}_p & & \\
 &&\ddots & \\
 && &\tilde{\Am}_p
\end{bmatrix},
\end{align}
and $\bv=\begin{bmatrix}(\uv(0))^T,0,\cdots,0\end{bmatrix}^T$.  Since
\begin{equation}\label{eq: Awavedecomp}
\tilde{\Am}_p=\Sm\otimes\frac{c^2}{(\Delta x)^2}\Am_p+\Sp\otimes \Id_{n_x},
\end{equation}
following similar procedure as for the 1D heat equation, $\tilde{\Am}_p$ can be expressed with $2\log 2n_x+4$ LCNU terms, and thus $\Am_w$ requires a total of $\log n_t+1+2(2\log 2n_x+4)$ terms. 
Figure~\ref{fig: lcucomp} compares the number of terms for the Pauli and Sigma basis decomposition with trends similar to those for the Poisson and heat equations.

\subsection{Higher Dimensional Systems}

The structure of the \( \Am_e \) and \( \Am_p \) matrices in~\Cref{eq:possionA,eq: heatA,eq: Awavedecomp} depends on the discretization scheme, spatial dimensions, and boundary conditions used. In higher dimensions, assuming the same finite difference scheme, the structured pattern of nonzero entries can still be efficiently captured using the Sigma basis, depending on the choice of boundary conditions. For example, Dirichlet boundary conditions do not affect the nonzero structure of \( \Am_e \) in~\Cref{eq:possionA}, as they only modify the right-hand-side vector \( \bv \). As a result, the same polylogarithmic scaling observed in the 1D case holds.

For Neumann and Robin boundary conditions, however, the boundary terms must be treated separately, as in \( \Am_{p2} \) from~\Cref{eq: heat_bc}. The number of such terms, denoted \( n_b \), scales as \( \Theta(n_x^{\frac{d-1}{d}}) \), where \( n_x \) is the \textit{total} number of grid points in \( d \) dimensions. If these boundary terms are position-dependent, up to \( n_b \) LCNU terms may be needed to represent \( \Am_{p2} \), which diminishes the efficiency gained from using the Sigma basis. However, if the boundary terms exhibit sufficient spatial uniformity, \( \Am_{p2} \) retains a structured sparsity pattern. This allows its LCNU decomposition to be performed using the semi-analytical approach described in~\Cref{sec: methoddecomp}, potentially recovering polylogarithmic scaling. This must be assessed on a case-by-case basis, as there is no general recipe.

\subsection{Carleman Linearized Non-linear PDEs}

Nonlinear PDEs cannot be directly simulated on a quantum computer. For PDEs with polynomial nonlinearities, \emph{Carleman linearization} (CL) has been proposed as a method to transform the nonlinear system into an infinite set of linear ODEs, which are subsequently truncated into a finite system of linear ODEs \cite{liu2021efficient,li2025potential,demirdjian2022variational,surana2024efficient}. These linear ODEs can be discretized in time to yield a linear system, similar to the examples discussed earlier. The CL transformation produces a specific structure in the resulting linear system that can be exploited using the Sigma basis decomposition. For the 1D nonlinear Burgers’ equation, it was shown in~\cite{gnanasekaran2024variational} that the number of LCNU terms under the Sigma basis scales polylogarithmically with the temporal grid size, offering a significant advantage over the Pauli basis. This result was further improved in~\cite{demirdjian2025efficient}, where polylogarithmic scaling was established with respect to both spatial and temporal grid sizes. We refer readers to these works for a detailed treatment of Carleman-linearized nonlinear PDEs.

\section{Utilizing sigma decomposition in variational and fault-tolerant quantum algorithms}
\label{sec: uses_lcu}

LCU serves as a foundational technique for algorithms in both the fault-tolerant quantum computing (FTQC) and noisy intermediate-scale quantum (NISQ) eras. In the NISQ setting, LCU plays a key role in VQAs, particularly in the design and efficient evaluation of cost functions. In the FTQC regime, LCU commonly appears through block-encoded representations of matrices. In this section, we show how our LCNU decomposition using the Sigma basis can be applied in both contexts --- supporting VQAs and facilitating block-encoding of structured matrices.

\subsection{LCU for VQAs}
In VQAs, one typically has to compute terms of the form $\bra{\psi_1} \Am_l \ket{\psi_2}$ and $\bra{\psi_1} \Am_i^* \,\mathbf{M}\, \Am_j \ket{\psi_2}$ for arbitrary states $\ket{\psi_1}$, $\ket{\psi_2}$ and unitary matrix $\Mm$. Evaluations of these terms are typically combined to compute the desired cost function on $\Am$.  Let us assume that we are given unitary matrices $\Um$ and $\Vm$ to prepare states $\ket{\psi_1}$ and $\ket{\psi_2}$ as follows,
\begin{align*}
\ket{\psi_1}=\Um\ket{0},
\ket{\psi_2}=\Vm\ket{0}.
\end{align*}
In this section, we provide the Hadamard test circuits to compute terms of the form indicated above for $\Am_l=\otimes_k \sigma_k$, where $\sigma_k \in \mathbb{S}$. 

\subsubsection{Evaluation of $\bra{\psi_1} \Am_l \ket{\psi_2}$}

The Hadamard test circuit for computing $\bra{\psi_1} \Am_l \ket{\psi_2}$ is shown in~\Cref{fig: global_cost_circuit}.  Starting with a $n+2$ qubit system (2 ancilla qubits), the circuit performs the following sequence of operations:
\begin{align*}
\ket{0^{n+2}}& \xrightarrow[]{H_{a_0}, OC_\Um, C_{\Vm}} \frac{1}{\sqrt{2}} \big( \ket{00} \ket{\psi_1} +  \ket{10} \ket{\psi_2} \big)  \\
\xrightarrow[]{C_{\Um_l}}& \frac{1}{\sqrt{2}} \big( \ket{00} \ket{\psi_1} +  \ket{1}\Um_l\ket{0}\ket{\psi_2} \big)  \\
=&\frac{1}{\sqrt{2}} \big( \ket{00} \ket{\psi_1} + 
 \ket{10} \Am_l \ket{\psi_2}+ \ket{11} \Am_l^c \ket{\psi_2}  \big)  \\
\xrightarrow[]{H_{a_0}} &\frac{1}{2} \Big(\ket{00} \big(\ket{\psi_1}+\Am_l \ket{\psi_2}\big)\\
&\quad+\ket{10} \big(\ket{\psi_1} - \Am_l \ket{\psi_2}\big) \\
&\quad+\ket{01} \Am_l^c \ket{\psi_2}-\ket{11} \Am_l^c \ket{\psi_2}\Big).
\end{align*}
Here $C_*, OC_*$ represent controlled and open-controlled application of the unitaries.  After measuring the two ancilla qubits and using the relation,
\begin{align*}
&P_{00} - P_{10} \\
&= \frac{1}{4} \left( \left\| \ket{\psi_1} + \Am_l \ket{\psi_2} \right\|^2 
                     - \left\| \ket{\psi_1} - \Am_l \ket{\psi_2} \right\|^2 \right) \\
&= \operatorname{Re} \bra{\psi_1} \Am_l \ket{\psi_2} \\
&= \operatorname{Re} \bra{0} \Um^* \Am_l \Vm \ket{0}
\end{align*}
gives the real part of the desired desired result. The imaginary part can be computed by inserting the phase gate on $a_0$. The unitary operator $\Um_l$, as shown in the blue boxes in \Cref{fig: global_cost_circuit}, can be implemented efficiently as discussed in~\Cref{subsec:Um}.

\begin{figure*}[htbp]
\centering
\begin{subfigure}[t]{0.42\textwidth}
\resizebox{1.\textwidth}{!}{
\begin{quantikz}[wire types = {q, q, q, q, n, q, q}, transparent]
\lstick{$a_0$ \quad \ket{0}}& \gate{H} & \ctrl{2} & \octrl{2}   &\ctrl{1} & \ctrl{1} &  \gate{H} & \meter{}  \\
\lstick{$a_1$ \quad \ket{0}}&   &&       &        \gate{X} \wire[d][1]{q}\gategroup[6,steps=2,style={dashed,rounded
corners,fill=blue!20, inner
xsep=2pt},background,  label style={label
position=below,anchor=north,yshift=-0.2cm}]{{\sc ${\Um}_l$}}  & \targ{} && \meter{}  \\
\lstick{$q_0$ \quad \ket{0}}& & \gate[5]{\Vm} & \gate[5]{\Um} &  \gate[5]{\bar{\Am}_l} &  \octrl{-1}  &  & \\
\lstick{$q_1$ \quad \ket{0}} &&  && &\ctrl{-1}  & & \\
\vdots & && && \vdots&   &\\
\lstick{$q_{n-2}$ \quad \ket{0}}& && & &\octrl{1}  && \\
\lstick{$q_{n-1}$ \quad \ket{0}}& && &  &\ctrl{-1} & & 
\end{quantikz}
}
\caption{Circuit for computing $\bra{0}\Um^* \Am_l \Vm\ket{0}$.}
\label{fig: global_cost_circuit}
\end{subfigure}
~
\begin{subfigure}[t]{0.55\textwidth}
\resizebox{1.\textwidth}{!}{
\begin{quantikz}[wire types = {q, q, q, q, n, q, q}, transparent]
\lstick{$a_0$ \quad \ket{0}}& \gate{H} & \ctrl{2} & \octrl{2} & \ctrl{1} & \ctrl{1}   & \ctrl{1} &   \octrl{1} & \octrl{1} & \gate{H} & \meter{}  \\
\lstick{$a_1$ \quad \ket{0}}&  & &         &   \gate{X}\wire[d][1]{q} \gategroup[6,steps=2,style={dashed,rounded
corners,fill=blue!20, inner
xsep=2pt},background, label style={label
position=below,anchor=north,yshift=-0.2cm}]{{\sc
${\Um}_j$}}    & \targ{}  & \octrl{1} &  \gate{X}\wire[d][1]{q}\gategroup[6,steps=2,style={dashed,rounded
corners,fill=blue!20, inner
xsep=2pt},background, label style={label
position=below,anchor=north,yshift=-0.2cm}]{{\sc
${\Um}_i$}} & \targ{} & & \meter{}  \\
\lstick{$q_0$ \quad \ket{0}}& & \gate[5]{\Vm} & \gate[5]{\Um} &  \gate[5]{\bar{\Am}_j}&  \octrl{-1}   & \gate[5]{\mathbf{M}}   & \gate[5]{\bar{\Am}_i} & \ctrl{-1}& & \\
\lstick{$q_1$ \quad \ket{0}} & && & & \ctrl{-1}   && & \octrl{-1}& &\\
\vdots & && && \vdots  &&  &  \vdots &  & \\
\lstick{$q_{n-2}$ \quad \ket{0}}& && & & \octrl{0} && & \ctrl{0} & &\\
\lstick{$q_{n-1}$ \quad \ket{0}}& && & & \ctrl{-1} && & \ctrl{-1} & &
\end{quantikz}
}
\caption{Circuit for computing $\bra{0}\Um^* \Am_i^* \mathbf{M} \Am_j \Vm\ket{0}$. }
\label{fig: local_cost_circuit}
\end{subfigure}
\caption{Hadamard test circuits for computing two terms which typically arise in VQAs. }
\end{figure*}

\subsubsection{Evaluation of $\bra{\psi_1} \Am_i^* \,\mathbf{M}\, \Am_j \ket{\psi_2}$}
To compute $\bra{\psi_1} \Am_i^* \mathbf{M} \Am_j \ket{\psi_2}$, we start similarly with a $n+2$ qubit system and apply the following operations as shown in~\Cref{fig: local_cost_circuit}:
\begin{align*}
\ket{0^{n+2}} &\xrightarrow[]{H_{a_0}, OC_\Um, C_{\Vm}} \frac{1}{\sqrt{2}} \big( \ket{00} \ket{\psi_1} +  \ket{10} \ket{\psi_2} \big)   \\
\xrightarrow[]{OC_{\Um_i}, C_{\Um_j}}& \frac{1}{\sqrt{2}} \big( \ket{0} {\Um}_i \ket{0}\ket{\psi_1} +  \ket{1} {\Um}_j \ket{0}\ket{\psi_2} \big)  \\
&\hspace{-15pt}= \frac{1}{\sqrt{2}} \big( \ket{00} \Am_i \ket{\psi_1} + \ket{01} \Am_i^c \ket{\psi_1} +\\
&\qquad   \ket{10} \Am_j \ket{\psi_2} +  \ket{11} \Am_j^c \ket{\psi_2} \big)  \\
\xrightarrow[]{COC_{\mathbf{M}}} &\frac{1}{\sqrt{2}} \big( \ket{00} \Am_i \ket{\psi_1} +  \ket{01} \Am_i^c \ket{\psi_1}+ \\
&\qquad \ket{10}\mathbf{M} \Am_j \ket{\psi_2} +  \ket{11} \Am_j^c \ket{\psi_2} \big)  \\
\xrightarrow[]{H_{a_0}} &\frac{1}{2} \bigg( \ket{00} \big(\Am_i \ket{\psi_1} + \Mm \Am_j \ket{\psi_2} \big) + \\
&\qquad\ket{10} \big( \Am_i \ket{\psi_1} - \Mm \Am_j \ket{\psi_2}  \big) + \\
  & \qquad \ket{01}\big(\Am_i^c\ket{\psi_1}+\Am_j^c\ket{\psi_2}\big)+\\
  &\qquad\ket{11}\big(\Am_i^c\ket{\psi_1}-\Am_j^c\ket{\psi_2}\big)  \bigg),
\end{align*}
where $ COC_*$ represent control-open control application of the unitary. Finally, measuring the two ancilla qubits 
\begin{align*}
P_{00} - P_{10} &= \frac{1}{4} \bigg( \Big\| \Am_i \ket{\psi_1} + \Mm \Am_j \ket{\psi_2}  \Big\| ^2 - \\
&\qquad\quad\Big\|  \Am_i \ket{\psi_1} - \Mm \Am_j \ket{\psi_2} \Big\| ^2\bigg) \\
&= \operatorname{Re}\bra{\psi_1} \Am_i^* \mathbf{M} \Am_j \ket{\psi_2},
\end{align*}
leads to the desired result. The imaginary part can be calculated similarly as indicated above.

\begin{remark}
    We have described the construction of Hadamard test circuits to compute the desired inner products with the LCNU terms. One can also construct Hadamard Overlap test circuits in a similar fashion to avoid the controlled application of the unitaries $\Um, \Vm$. 
\end{remark}

\subsubsection{Resource Estimation}

As highlighted throughout this paper, our LCNU technique using the Sigma basis can yield an exponential reduction in the number of terms compared to standard LCU decomposition. However, the quantum circuit for \( \Um_l \) involves a multi-controlled \( C^mX \) gate, which increases the depth of Hadamard test circuits. In this section, we show that the circuits can be implemented with a minimal increase in the gate size and circuit depth.

Efficient implementation of \( C^mX \) gates is an active area of research. These gates are typically decomposed into elementary universal gate sets such as CNOT and single-qubit gates. Any such decomposition must have a depth of \( \Omega(\log m) \) and a size of \( \Omega(m) \)~\cite{10.5555/2011679.2011682}. A recent construction~\cite{nie2024quantum} implements the \( C^mX \) gate using only single-qubit and CNOT gates, achieving a depth of at most \( \mathcal{O}(\log m) \) and size \( \mathcal{O}(m) \), with one ancilla qubit. In our setting, \( m \leq \log N \), where \( N \) is the matrix size, so the additional overhead remains logarithmic in problem size. This modest increase in circuit complexity is far outweighed by the exponential reduction in the number of LCU terms achieved through the Sigma basis. Moreover, emerging hardware platforms such as trapped ions, Rydberg atoms, and superconducting circuits may support direct implementation of \( C^mX \) gates in the future~\cite{katz2022n}.

\subsection{LCU for FTQC algorithms}

Block-encoding is a central data access model for representing matrices in fault-tolerant quantum algorithms. A common approach for constructing efficient block-encodings is via the LCU decomposition of the target matrix~\cite{lin2022lecturenotesquantumalgorithms}. Although our LCNU decomposition uses non-unitary operators from the Sigma basis, we show that it can still be used to efficiently construct block-encodings. In the following section, we describe this procedure and compare its resource cost with the standard Pauli-basis LCU approach. We note that other techniques have also been developed to exploit matrix sparsity and structure in block-encoding constructions, including~\cite{camps2022fable,camps2024explicit,sunderhauf2024block}.
 
Given the standard LCU decomposition of a matrix $\Am$ with $L$ terms, it can be block-encoded using the PREP and SELECT operators defined below. Without loss of generality, assume that $\log_2 L$ is an integer and $\sum_{l=0}^{L} \alpha_l = 1$. 
\begin{align*}  
\ket{\mathbf{\alpha}} \coloneqq \operatorname{PREP} \ket{0} &= \sum_{l=1}^{L}\sqrt{\alpha_l}\ket{l}, \\
\operatorname{SELECT}\ket{l}\ket{\psi} &= \ket{l}\Am_l\ket{\psi}.
\end{align*}

Despite the non-unitarity of the operators $\Am_l$ in the LCNU decomposition under the Sigma basis, we can efficiently construct a block-encoding of $\Am$. Block-encodings have extensibility properties, i.e., given block-encodings of matrices $\mathbf{P}$ and $\mathbf{Q}$, we can get the block-encoding of $c_0\mathbf{P} + c_1 \mathbf{Q}$ under mild conditions on $c_0, c_1$~\cite{tang_qsvt_lecture}.  As each \( \Am_l \) is block-encoded via unitary completion to yield unitary \( \Um_l \), we can apply this linear combination of block-encodings approach to naturally encode $\Am$. 

\subsubsection{Resource Estimation}

In order to compare the resource estimates for Pauli and Sigma basis block-encoding, we use the protocols for the PREP and SELECT operators (for Pauli strings) described in~\cite{zhang2024circuit} that were shown to have near-optimal gate complexities. The number of ancilla qubits, gate count, and circuit depth for these routines are summarized in~\Cref{table:comp_sp_so}. 

\begin{table*}[t]
\centering
\begin{tabular}{cccc}
\toprule
Protocol & $n_{\text{anc}}$ & Count & Depth  \\
\midrule
PREP  & $\Omega(\log{L})\leq n_\text{anc} \leq \mathcal{O}(L)$  & $\mathcal{O}(L \log{1/\epsilon}) $ & $\Tilde{\mathcal{O}}(L \log (1/\epsilon)\frac{\log{n_\text{anc}}}{n_{\text{anc}}})$\\ \\
SELECT  & $\Omega(\log{L} + \log N) \leq n_\text{anc} \leq \mathcal{O}(L\log{N})$  & $\mathcal{O}(L\log{N})$  & $\mathcal{O}(L\log{N}\frac{\log{n_\text{anc}}}{n_\text{anc}})$  \\
\bottomrule
\end{tabular}
\caption{Clifford$+$T complexity of state preparation of $\ket{\alpha}$ of size $L$ and $\text{Select}(A_l)$ for Pauli strings, where $\epsilon$ is the desired accuracy in preparing the state. The $\Tilde{\mathcal{O}}$ hides doubly logarithmic factors.}
\label{table:comp_sp_so}
\end{table*}

Let us estimate the circuit complexity for the SELECT operator for $\Um_l$ by following a similar procedure as in~\cite{zhang2024circuit}. First, the SELECT operator for $\Um_l$ is defined as follows: 
\[
\operatorname{SELECT}\ket{l}\ket{0}\ket{\psi} = \ket{l} \Um_l\ket{0}\ket{\psi},
\]
while the $\operatorname{PREP}$ operator remains the same as defined above. Note that one ancilla bit is needed to block-encode $\Am_l$ in the unitary $\Um_l$.

We apply Lem. 6, 7 and the proof of Thm. 4 in~\cite{zhang2024circuit}. Redefine, $C_{\text{ctrl}}(\Um_l, r)$ and $D_{\text{ctrl}}(\Um_l, r)$ as the count and depth of Clifford $+$ T gates required to construct a single-qubit controlled-$\Um_l$ given $r$ ancillary qubits. The circuit for $\Um_l$ consists of single qubit gates and a $C^m X$ gate  where $m \leq n$ and is implemented on $n + 1$ qubits as we discussed in~\Cref{thm: circuit_U}. Given, $n+1$ ancillary qubits, controlled-$\Um_l$ can be constructed with the circuit shown in~\Cref{fig: blenc_Ul}.
\begin{figure}
    \centering
    \begin{quantikz}
\lstick{$c$}&\ctrl{7}& \ctrl{1}&\ctrl{8}&\ctrl{7}&\qw\\
\lstick{$a$}&\qw& \gate{X} \wire[d][0]{q}\gategroup[8,steps=2,style={dashed,rounded
corners,fill=blue!20, inner
xsep=2pt},background,  label style={label
position=below,anchor=north,yshift=-0.2cm}]{} & \targ{} &\qw&\qw\\
\lstick{$a_0$}&\targ{}&\ctrl{1}& \qw&\targ{}&\qw\\
\lstick{$q_{0}$}&\qw&\gate{\bar{\Am}_{l,0}}&\octrl{1}&\qw&\qw\\
\lstick{$a_1$}&\targ{}&\ctrl{1}&\qw&\targ{}&\qw\\
\lstick{$q_{1}$}&\qw&\gate{\bar{\Am}_{l,1}}&\ctrl{1}\qw&\qw & \qw\\
\cdots&& \cdots&&&\cdots\\
\lstick{$a_{n-1}$}&\targ{}&\ctrl{1}&\qw&\targ{}&\qw\\
\lstick{$q_{n-1}$}&\qw&\gate{ \bar{\Am}_{l,n-1}}&\ctrl{0}&\qw&\qw 
\end{quantikz}
    \caption{Circuit for constructing controlled-$\Um_l$ with $n+1$ ancilla bits. The $\bar{\Am}_{l,*}$ are the tensor product terms of the corresponding completion operator $\bar{\Am}_l$.}
    \label{fig: blenc_Ul}
\end{figure}

The two $n-$Toffoli gates (one control, $n$ targets) can be constructed effectively with $\mathcal{O}(n)$ count and $\mathcal{O}(\log n)$ depth of Clifford$+$T gates. The $C^m X$ gate can be constructed with the same asymptotic gate count and depth with an additional ancilla qubit\cite{nie2024quantum}. Thus $C_{\text{ctrl}}(\Um_l, n+2) = \mathcal{O}(n)$, $D_{\text{ctrl}}(\Um_l, n+2) = \mathcal{O}(\log n)$.  

Finally using the procotols defined in Lem 6, 7 and following the same analysis in the proof of Thm. 4 in~\cite{zhang2024circuit}, we obtain the same asymptotic gate count and circuit depth for Sigma basis decomposition that is summarized in~\Cref{table:comp_sp_so} at the expense of two additional ancilla bits. 

Thus, the comparison for gate complexities for block-encoding based on Pauli basis and Sigma basis depend purely on the number of terms $L$ in the decomposition. Under the assumption that $L$ scales as $\mathcal{O}(N)$ with Pauli and as $\mathcal{O}(\text{poly}\log N)$ with Sigma basis, we obtain an exponential improvement in gate count, circuit depth, and ancilla qubits with our Sigma basis LCNU approach. 

\section{Unitary Completion vs Unitary Dilation}\label{sec: discussion}

One might wonder how our approach based on unitary completion compares to the unitary dilation technique. We will compare and contrast the two related techniques in this section. 

The unitary dilation of a matrix is given by~\Cref{def:dilation}. Note that the dilation operator, $\Dm_{\Am}$, is well defined as $\In-\Am^{*}\Am\geq 0$ is a positive semi-definite matrix under the contraction assumption. 

\begin{definition}
Let $\Am$ be contraction, i.e., $\|\Am\|_2\leq 1$, then the unitary dilation $\Um_{\Am}$ of $\Am$ is defined as:
\begin{equation}
\Tilde{\Um}_{\Am}=\left(
            \begin{array}{cc}
              \Am & \Dm_{\Am^{*}} \\
              \Dm_{\Am} & -\Am^{*} \\
            \end{array}
          \right), \label{eq:ud}
\end{equation}
where, $\Dm_{\Am}=\sqrt{\In-\Am^{*}\Am}$. 
\label{def:dilation}
\end{definition}

As per the Sz.-Nagy dilation theorem, every contraction on a Hilbert space has a unitary dilation which is unique up to an unitary equivalence \cite{schaffer1955unitary}. In contrast, the concept of unitary completion only makes sense when the non-zero columns of $\Am$ are unitary according to~\Cref{def: completion}. Thus, unitary completion is not applicable to every contraction on the Hilbert space.  

Both dilation and completion exist for the Sigma basis LCNU terms, $\Am_l$. Let us focus on understanding this relation for the remainder of the section. First, note that the dilation of $\Am_l$, $\tilde{\Um}_l \coloneqq \tilde{\Um}_{\Am_l}$, on an ancillary system of the form $\ket{0} \ket{\psi}$ gives
\begin{equation*}
\tilde{\Um}_{l}|0\ra|\psi\ra=|0\ra \Am_l  |\psi\ra+|1\ra \Dm_{\Am_l} |\psi\ra,
\end{equation*}
similar to the expression in \Cref{eq: Umact}.
The term $(\In-\Am^*_l\Am_l)^2$ can be simplified as $\In-\Am^T_l\Am_l$ using properties of the Sigma basis given in~\cite{gnanasekaran2024efficient}. Thus, we can simplify $\tilde{\Um}_l$ as
\begin{equation}\label{eq: Um_dil}
\tilde{\Um}_{l}=\begin{bmatrix}
	\Am_l &\In-\Am_l\Am_l^T	\\
              \In-\Am^{T}_l\Am_l  & -\Am^{T}_l  \\
            \end{bmatrix}.
\end{equation}
Note that $\tilde{\Um}_{l}$ is unitary even without the negative sign in the (2,2) block position and can be shown using the properties of the Sigma basis. To simplify analysis, let us ignore the negative sign as it is unitarily equivalent (up to a controlled $\sigma_z$ gate). With this, $\tilde{\Um}_{l}$ is also a permutation matrix as we show below and can be expressed as a sequence of Toffoli gates. 
 
~\Cref{thm: Udil} shows that in general $\Tilde{\Um}_l$ requires more $C^n X$ gates than $\Um_l$ and hence is less efficient compared to our unitary completion approach.  Note that while there is a reduction in the number of single qubit gates, the bottleneck in circuit complexity arises from requiring \textit{multiple} $C^n X$ gate. With our completion-based approach we \textit{always} require \textit{only} a single $C^n X$. 

\begin{theorem}\label{thm: Udil}
    $\Tilde{\Um}_l$ as defined in~\Cref{eq: Um_dil} can be implemented using a single qubit gate and $2s+1$ $C^m X$ gates where $s = \sum_{i} \left[ \sigma_i \in \{ \Sp, \Sm \} \right]$ and $m \leq n$.
\end{theorem}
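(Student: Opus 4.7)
The plan is to show that $\tilde{\Um}_l$ is a permutation matrix on the $(n{+}1)$-qubit space, identify its cycle structure, and then exhibit a two-step implementation whose gate count matches the claim. Since each $\sigma_p \in \mathbb{S}$ satisfies $\sigma_p \sigma_p^T, \sigma_p^T \sigma_p \in \{\Id, \Sp\Sm, \Sm\Sp\}$, both $\Am_l \Am_l^T$ and $\Am_l^T \Am_l$ are $0/1$ diagonal matrices, so $\In - \Am_l \Am_l^T$ and $\In - \Am_l^T \Am_l$ are also $0/1$ diagonal. Together with the fact that $\Am_l$ has at most one nonzero per row and per column, this implies each of the four blocks of $\tilde{\Um}_l$ has at most one $1$ per row and per column, and unitarity forces $\tilde{\Um}_l$ to be a $2^{n+1} \times 2^{n+1}$ permutation matrix.

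Next I would read off the cycle structure of this permutation $\pi$. Let $S_r, S_c \subseteq \{0, \dots, 2^n - 1\}$ denote the nonzero row and column index sets of $\Am_l$, and let $\phi: S_r \to S_c$ be the bijection defined by $\Am_l(r, \phi(r)) = 1$. By \Cref{lem: Arow}, $\phi(r)$ is obtained from $r$ by flipping exactly the $s$ bit positions with $\sigma_p \in \{\Sp, \Sm\}$, and $D := (S_r \cup S_c)^c$ collects the payload-empty indices. A block-by-block evaluation of $\tilde{\Um}_l$ then shows that $\pi$ consists of the $4$-cycles $(0c,\, 1c,\, 1\phi(c),\, 0\phi(c))$ for $c \in S_r$ (which degenerate to fixed points when $s = 0$), together with the transpositions $(0d,\, 1d)$ for $d \in D$. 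The key observation is that composing $\pi$ with the unconditional ancilla flip $\pi^\ast$ (the map exchanging ancilla $0 \leftrightarrow 1$ and leaving the payload untouched) trivializes $D$ and reduces each $4$-cycle to a single swap:
\begin{equation*}
\pi \;=\; \Big(\prod_{c \in S_r}(0c,\, 1\phi(c))\Big)\, \pi^\ast.
\end{equation*}

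The factor $\pi^\ast$ is implemented by a single $X$ gate on the ancilla. Each remaining swap $(0c,\, 1\phi(c))$ exchanges two indices differing in exactly $s+1$ bits (the ancilla plus the $s$ $\{\Sp, \Sm\}$-bit positions), and by \Cref{cor:kbits_swap} a single such swap requires $2s+1$ $C^n X$ gates. Different $c \in S_r$ differ only in the $k$ bits where $\sigma_p = \Id$, so the identical $2s+1$-gate sequence, executed with those $k$ identity-bit qubits left uncontrolled, simultaneously realizes every required swap as $2s+1$ $C^{n-k} X$ gates with $m = n - k \leq n$. Combining with the initial ancilla $X$ yields one single-qubit gate and $2s+1$ $C^m X$ gates, as claimed. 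The main obstacle I anticipate is the block-wise bookkeeping used to extract the cycle structure: the $4$-cycle $(0c,\, 1c,\, 1\phi(c),\, 0\phi(c))$ arises from a delicate interplay between $\Am_l$, $\Am_l^T$, and the off-diagonal blocks $\In - \Am_l \Am_l^T$ and $\In - \Am_l^T \Am_l$, and verifying that composition with $\pi^\ast$ eliminates $D$ requires careful attention to the asymmetric roles of $S_r$ and $S_c$. Once that decomposition is settled, the gate count follows directly from \Cref{cor:kbits_swap} via the same parallelization trick used in the proof of \Cref{thm: circuit_U}.
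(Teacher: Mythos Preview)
Your proposal is correct and follows essentially the same route as the paper. The paper factors $\tilde{\Um}_l = \tilde{\Um}_{l,1}\tilde{\Um}_{l,2}$ with $\tilde{\Um}_{l,2}$ the ancilla $X$ (your $\pi^\ast$) and $\tilde{\Um}_{l,1}$ the permutation swapping $r \leftrightarrow 2^n + c$ (your $\tau = \prod_{r\in S_r}(0r,1\phi(r))$), then invokes \Cref{cor:kbits_swap} and the same identity-bit parallelization you describe; your cycle-structure analysis is just an explicit verification of what the paper's matrix factorization asserts directly.
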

\begin{proof}
We can write $\Tilde{\Um}_l$ as
\begin{align*}
    \Tilde{\Um}_l &= \Tilde{\Um}_{l,1}\Tilde{\Um}_{l,2} \\
    &= \begin{bmatrix}
	\In-\Am_l\Am_l^T & \Am_l	\\
             \Am^{T}_l & \In-\Am^{T}_l\Am_l    \\
            \end{bmatrix}  \begin{bmatrix}
                 & \Id \\
                 \Id & 
            \end{bmatrix}.
\end{align*}
We can construct $\Tilde{\Um}_{l,2}$ simply by applying $X$ gate on the ancilla qubit and $\Id$ on the other qubits. Let's focus on constructing $\Tilde{\Um}_{l,1}$.

Let $\Am_l = \sigma_0 \otimes \sigma_1 \otimes \dots \otimes \sigma_{n-1}$. Assume, that for certain indices $\mathbb{I} = \{i_1, \dots, i_s\}$, non-overlapping subsets $S_1 = \{ \Sp, \Sm\}$ and $S_2 = \{ \Sp\Sm, \Sm\Sp,\Id \}$, we have
\begin{align*}
   \sigma_{i \in \mathbb{I}} \in S_1, \quad 
   \sigma_{i \notin \mathbb{I}} \in S_2. 
\end{align*}
 $\Am_l$ is a binary matrix by construction. Suppose $\Am_l(r,c) = 1$. Using~\Cref{lem: Arow},
\[
\Am_l(r,c)=1 \iff \sigma_p(q_r(p), q_c(p)) = 1\quad \forall p,
\]
where, $r = \sum_{p=0}^{n-1} 2^{n-1-p}q_r(p)$ and $c = \sum_{p=0}^{n-1} 2^{n-1-p}q_c(p)$ are binary representations. By definition
\begin{align*}
    \sigma_i \in S_1 &\implies q_r(i) \neq q_c(i), \\
    \sigma_i \in S_2 &\implies q_r(i) = q_c(i). 
\end{align*} 
As the cardinality of the set $\mathbb{I}$ is $s$, the binary representation of $r$ and $c$ differ in $s$ bits. 

We have
\begin{align*}
\Am_l(r,c) = 1 &\implies
    \Am_l^T(c,r)=1, \\
    &\implies \tilde{\Um}_{l,1}(r,2^n+c) =1 \\
    &\implies \tilde{\Um}_{l,1}(2^n+c,r) =1. 
\end{align*}
Similarly, we also have
\begin{align*}
\Am_l(r,c) = 1 &\implies
    \Am_l\Am_l^T(r,r)=1, \\
    &\implies \tilde{\Um}_{l,1}(r,r) = 0 \\
    &\implies \tilde{\Um}_{l,1}(2^n+c,2^n+c) =0. 
\end{align*}
With these relations, we observe that $\Um_l$ is a permutation matrix. And, we can construct $\Tilde{\Um}_{l,1}$ by permuting rows $r$ and $2^n+c$ of the Identity matrix. As the binary representation of the two rows differ by $s+1$ bits, we require $2s+1$ $C^n X$ gates to permute them using~\Cref{cor:kbits_swap}. The exact set of control and target operations for each of these $C^n X$ gates can be inferred based on the proof of~\Cref{cor:kbits_swap} and~\Cref{rem:control_op}. 

If $k$ terms in the tensor product are $\Id$, then there are $2^k$ non-zero rows in $\Am_l$. Naively, one would then expect $2^k \, (2s + 1)$ $C^n X$ gates to construct $\Tilde{\Um}_{l,1}$. This would be true if we implement each set of $2s+1$ gates one after the other. However, similar to the proof of~\Cref{thm: Acomp}, we can optimize the circuit implementation to simply needing $2s+1$ $C^{n-k} X$ gates. 
\end{proof}

\begin{remark}
Note that when $s=0$ in~\Cref{thm: Udil},  $\tilde{\Um}_l = \Um_l$, i.e., the unitary produced by dilation is essentially the same (ignoring the phase factor) as the one produced by our completion technique. 
\end{remark}

\section{Linear Combination of ``Things"}

The Sigma basis set $\mathbb{S}$ defined in this work is an universal basis and can be used to compute a LCNU type decomposition for any matrix in $\mathbb{C}^{2^n \times 2^n}$. Trivially,  the Pauli matrices lie in the span of the Sigma basis.

Out of the five basis matrices in $\mathbb{S}$, only four are linearly independent, i.e. $\mathbb{S}$ in an over-complete basis. For example, $\Id = \Sp\Sm + \Sm\Sp$ and can be ignored. However, often times it is advantageous to keep $\Id$ in the basis set as it can lead to fewer terms in the decomposition as illustrated through various examples in~\Cref{sec: apps}. Similarly, any Pauli matrix can be added to the set as relevant to the problem. For example, in~\Cref{eq: sigma_x}, we can replace $\Sm + \Sp = \sigma_x$ and add Pauli-X matrix to $\mathbb{S}$. The corresponding completion operator for any Pauli matrix is itself and there is no control operation on the corresponding qubit (similar to how we handled $\Id$ in the proofs). Thus, we always have the flexibility to freely mix Pauli and Sigma basis terms to construct efficient LCNU type decompositions for a given problem. 

The Sigma basis was further generalized in \cite{demirdjian2025efficient} with the addition of certain permutation matrices and was used to efficiently represent Carleman linearized Burgers' equation. In general, we note that any matrix that is ``easy" to implement, i.e., simple unitaries with low-depth circuits, can be added to the Sigma basis. This flexibility makes our approach truly a linear combination on ``things", expanding the scope of LCU beyond conventional bases.  

\section{Conclusions}\label{sec: conc}

We introduced a novel framework for efficient LCNU decomposition of structured sparse matrices using the Sigma basis, achieving a polylogarithmic scaling in the number of decomposition terms with respect to matrix size. Unlike traditional Pauli-based approaches, our method leverages a simple set of \textit{non-unitary }operators. We addressed the non-unitarity via unitary completion—providing a simple and efficient quantum circuit construction with only marginal overhead compared to Pauli-based circuits.

We demonstrated how these unitary completion circuits can be seamlessly integrated into Hadamard-like test circuits for evaluating observables in VQAs. Furthermore, we extended this framework to construct block encodings of arbitrary operators given their Sigma basis decomposition, enabling their use in fault-tolerant quantum algorithms. We analytically established the resource requirements for Sigma basis LCNU based block encoding, showing an exponential reduction in the number of terms and overall circuit complexity compared to the Pauli basis LCU approach.

We illustrated our approach decomposition of matrices arising from several PDE discretizations and showed an exponential reduction in the number of decomposition terms compared to the Pauli basis. To support broader applicability, we also developed both numerical and semi-analytical tools for computing Sigma basis decompositions for arbitrary matrices. The Sigma basis LCNU technique has also been successfully applied in the study of nonlinear PDEs~\cite{demirdjian2025efficient} and in linear and nonlinear PDE-constrained optimization problems~\cite{surana2024variational,gnanasekaran2024variational}.

Looking ahead, the Sigma basis opens new avenues for efficient operator representations beyond quantum linear algebra applications. Future work may explore domain-specific Sigma-type bases, designed to exploit the intrinsic structure of problems across quantum simulation, machine learning, and optimization—extending the philosophy of linear combination of ``things” as a unifying abstraction for quantum algorithm design.

\section{Acknowledgments}
This research was developed with funding from the Defense Advanced Research Projects Agency (DARPA). The views, opinions, and/or findings expressed are those of the author(s) and should not be interpreted as representing the official views or policies of the Department of Defense or the U.S. Government.

\appendix
\section{CCNOT gate}\label{app: ccnot}

The truth table for the CCNOT gate with different control and target operations is given~\Cref{tab:truth_table}. This can be used to understand the placement of open and closed control operations on the qubits as given in~\Cref{rem:control_op}.

\begin{table*}[htbp]
\centering
\begin{tabular}{c c || c c}
\textbf{Circuit} & \textbf{Truth Table} & \textbf{Circuit} & \textbf{Truth Table} \\ 
\midrule
\begin{quantikz}[row sep=small, column sep=small]
\lstick{\(q_0\)} & \ctrl{2} & \qw \\
\lstick{\(q_1\)} & \ctrl{1} & \qw \\
\lstick{\(q_2\)} & \targ{}  & \qw
\end{quantikz}
&
\(
\begin{array}{ccc|ccc}
q_0 & q_1 & q_2 & q_0' & q_1' & q_2' \\ \hline
0 & 0 & 0 & 0 & 0 & 0 \\
0 & 0 & 1 & 0 & 0 & 1 \\
0 & 1 & 0 & 0 & 1 & 0 \\
0 & 1 & 1 & 0 & 1 & 1 \\
1 & 0 & 0 & 1 & 0 & 0 \\
1 & 0 & 1 & 1 & 0 & 1 \\
1 & 1 & 0 & \ab{1} & \ab{1} & \ab{1} \\
1 & 1 & 1 & \ab{1} & \ab{1} & \ab{0} \\
\end{array}
\) & \begin{quantikz}[row sep=small, column sep=small]
\lstick{\(q_0\)} & \octrl{2} & \qw \\
\lstick{\(q_1\)} & \octrl{1} & \qw \\
\lstick{\(q_2\)} & \targ{}  & \qw
\end{quantikz}
&
\(
\begin{array}{ccc|ccc}
q_0 & q_1 & q_2 & q_0' & q_1' & q_2' \\ \hline
0 & 0 & 0 & \ab{0} & \ab{0} & \ab{1} \\
0 & 0 & 1 & \ab{0} & \ab{0} & \ab{0} \\
0 & 1 & 0 & 0 & 1 & 0 \\
0 & 1 & 1 & 0 & 1 & 1 \\
1 & 0 & 0 & 1 & 0 & 0 \\
1 & 0 & 1 & 1 & 0 & 1 \\
1 & 1 & 0 & 1 & 1 & 0 \\
1 & 1 & 0  & 1 & 1 & 0  \\
\end{array}
\)\\
\midrule
\begin{quantikz}[row sep=small, column sep=small]
\lstick{\(q_0\)} & \ctrl{2} & \qw \\
\lstick{\(q_1\)} & \octrl{1} & \qw \\
\lstick{\(q_2\)} & \targ{}  & \qw
\end{quantikz}
&
\(
\begin{array}{ccc|ccc}
q_0 & q_1 & q_2 & q_0' & q_1' & q_2' \\ \hline
0 & 0 & 0 & 0 & 0 & 0 \\
0 & 0 & 1 & 0 & 0 & 1 \\
0 & 1 & 0 & 0 & 1 & 0 \\
0 & 1 & 1 & 0 & 1 & 1 \\
1 & 0 & 0 & \ab{1} & \ab{0} & \ab{1} \\
1 & 0 & 1 & \ab{1} & \ab{0} & \ab{0} \\
1 & 1 & 0 & 1 & 1 & 0  \\
1 & 1 & 1 & 1 & 1 & 1 \\
\end{array}
\) & \begin{quantikz}[row sep=small, column sep=small]
\lstick{\(q_0\)} & \octrl{2} & \qw \\
\lstick{\(q_1\)} & \ctrl{1} & \qw \\
\lstick{\(q_2\)} & \targ{}  & \qw
\end{quantikz}
&
\(
\begin{array}{ccc|ccc}
q_0 & q_1 & q_2 & q_0' & q_1' & q_2' \\ \hline
0 & 0 & 0 & 0 & 0 & 0 \\
0 & 0 & 1 & 0 & 0 & 1 \\
0 & 1 & 0 & \ab{0} & \ab{1} & \ab{1} \\
0 & 1 & 1 & \ab{0} & \ab{1} & \ab{0}  \\
1 & 0 & 0 & 1 & 0 & 0 \\
1 & 0 & 1 & 1 & 0 & 1 \\
1 & 1 & 0 & 1 & 1 & 0 \\
1 & 1 & 0  & 1 & 1 & 0  \\
\end{array}
\)\\
\midrule
\begin{quantikz}[row sep=small, column sep=small]
\lstick{\(q_0\)} & \targ{} & \qw \\
\lstick{\(q_1\)} & \ctrl{-1} & \qw \\
\lstick{\(q_2\)} & \ctrl{-2}  & \qw
\end{quantikz}
&
\(
\begin{array}{ccc|ccc}
q_0 & q_1 & q_2 & q_0' & q_1' & q_2' \\ \hline
0 & 0 & 0 & 0 & 0 & 0 \\
0 & 0 & 1 & 0 & 0 & 1 \\
0 & 1 & 0 & 0 & 1 & 0 \\
0 & 1 & 1 & \ab{1} & \ab{1} & \ab{1} \\
1 & 0 & 0 & 1 & 0 & 0 \\
1 & 0 & 1 & 1 & 0 & 1 \\
1 & 1 & 0 & 1 & 1 & 0 \\
1 & 1 & 1 & \ab{0} & \ab{1} & \ab{1} \\
\end{array}
\) & \begin{quantikz}[row sep=small, column sep=small]
\lstick{\(q_0\)} & \targ{} & \qw \\
\lstick{\(q_1\)} & \octrl{-1} & \qw \\
\lstick{\(q_2\)} & \octrl{-2}  & \qw
\end{quantikz}
&
\(
\begin{array}{ccc|ccc}
q_0 & q_1 & q_2 & q_0' & q_1' & q_2' \\ \hline
0 & 0 & 0 & \ab{1} & \ab{0} & \ab{0} \\
0 & 0 & 1 & 0 & 0 & 1\\
0 & 1 & 0 & 0 & 1 & 0 \\
0 & 1 & 1 & 0 & 1 & 1 \\
1 & 0 & 0 & \ab{0} & \ab{0} & \ab{0} \\
1 & 0 & 1 & 1 & 0 & 1 \\
1 & 1 & 0 & 1 & 1 & 0 \\
1 & 1 & 0  & 1 & 1 & 0  \\
\end{array}
\)\\
\midrule
\begin{quantikz}[row sep=small, column sep=small]
\lstick{\(q_0\)} & \targ{} & \qw \\
\lstick{\(q_1\)} & \ctrl{-1} & \qw \\
\lstick{\(q_2\)} & \octrl{-2}  & \qw
\end{quantikz}
&
\(
\begin{array}{ccc|ccc}
q_0 & q_1 & q_2 & q_0' & q_1' & q_2' \\ \hline
0 & 0 & 0 & 0 & 0 & 0 \\
0 & 0 & 1 & 0 & 0 & 1 \\
0 & 1 & 0 & \ab{1} & \ab{1} & \ab{0} \\
0 & 1 & 1 & 0 & 1 & 1 \\
1 & 0 & 0 & 1 & 0 & 0 \\
1 & 0 & 1 & 1 & 0 & 1 \\
1 & 1 & 0 & \ab{0} & \ab{1} & \ab{0} \\
1 & 1 & 1 & 1 & 1 & 1 \\
\end{array}
\) & \begin{quantikz}[row sep=small, column sep=small]
\lstick{\(q_0\)} & \targ{} & \qw \\
\lstick{\(q_1\)} & \octrl{-1} & \qw \\
\lstick{\(q_2\)} & \ctrl{-2}  & \qw
\end{quantikz}
&
\(
\begin{array}{ccc|ccc}
q_0 & q_1 & q_2 & q_0' & q_1' & q_2' \\ \hline
0 & 0 & 0 & 0 & 0 & 0 \\
0 & 0 & 1 & \ab{1} & \ab{0} & \ab{1} \\
0 & 1 & 0 & 0 & 1 & 0 \\
0 & 1 & 1 & 0 & 1 & 1 \\
1 & 0 & 0 & 1 & 0 & 0 \\
1 & 0 & 1 & \ab{0} & \ab{0} & \ab{1} \\
1 & 1 & 0 &  1 & 1 & 0\\
1 & 1 & 1 & 1 & 1 & 1 \\
\end{array}
\)\\
\end{tabular}
\caption{Illustration of the truth table for CCNOT gate with different control and target operations.}
\label{tab:truth_table}
\end{table*}

\section{Proof of~\Cref{cor:kbits_swap}}
\label{app:pf_cor1}
\kbitsswap*
\begin{proof}
    We can prove this by induction on $k$. The base case of $k=0$ is true using~\Cref{lem:toff}. Assume that the statement holds for $k$. Consider, two rows $r_1$ and $r_2$ that differ by $k+1$ bits in their binary representation. WLOG, assume that the bits that differ in  the binary representation of $r_1$ and $r_2$ includes the most significant bit, i.e., 
    \begin{align*}
    r_1 &= \{0\,q_{r_1}(n-1)\dots\,q_{r_1}(0)\} \\ &= \sum_{p=1}^{n} 2^{n-p} q_{r_1}(p), \\
    r_2 &= \{1\,q_{r_2}(n-1)\dots\,q_{r_2}(0)\} \\&= 2^n + \sum_{p=1}^{n} 2^{n-p} q_{r_2}(p),
    \end{align*}
    where $q_{r_1}(p), q_{r_2}(p)\in \{0,1\}$. 
    In order to permute rows $r_1$ and $r_2$: 
    \begin{enumerate}
        \item Permute rows $r_1$ and $r_1' = 2^n + r_1$. This can be done using a single $C^n X$ gate by~\Cref{lem:toff} as they differ by one bit. Now row $r_1$ is in position $r_1'$. 
        \item Permute row in position $r_1'$ and $r_2$. These rows differ by $k$ bits and require $2(k-1)+1\, C^n X$ gates by induction hypothesis. Now original row $r_1$ is in position $r_2$ and original row in position $r_2$ is in position $r_1'$.
        \item Finally, permute rows in position $r_1'$ and $r_1$ again using one $C^n X$ gate. 
    \end{enumerate}
    Thus, we get a total of $1 + 2(k-1)+1 + 1 = 2k+1$ $C^n X$ gates completing the proof.  
\end{proof} 

\bibliographystyle{unsrt}
\bibliography{references}
\end{document}